\newtheorem{theorem}{Theorem}[section]
\newtheorem{lemma}[theorem]{Lemma}
\newtheorem{example}[theorem]{Example}
\newtheorem{Them}{Theorem}[section]
\begin{document}
\makeatletter
\def\@setauthors{%
\begingroup
\def\thanks{\protect\thanks@warning}%
\trivlist \centering\footnotesize \@topsep30\p@\relax
\advance\@topsep by -\baselineskip
\item\relax
\author@andify\authors
\def\\{\protect\linebreak}%
{\authors}%
\ifx\@empty\contribs \else ,\penalty-3 \space \@setcontribs
\@closetoccontribs \fi
\endtrivlist
\endgroup }
 \makeatother
 \baselineskip 17pt
\title[{{\small Optimal Control of a big financial company with Liability  }}]
 {{ Optimal control of a big financial company
  with debt liability under  bankrupt probability constraints}}
\author[{\bf  Z.Liang and B.Sun } ]
{ Zongxia Liang \\ Department of Mathematical Sciences, Tsinghua
University, Beijing 100084, China. Email:
zliang@math.tsinghua.edu.cn  \\
Bin Sun \\ Department of Mathematical Sciences, Tsinghua University,
Beijing 100084, China. Email: nksunbin@yahoo.com.cn}
\begin{abstract}
This paper considers an optimal control of a big financial company
with debt liability under bankrupt probability constraints. The
company, which faces constant liability payments and has choices to
choose various production/business policies from an available set of
control policies with different expected profits and risks, controls
the business policy and dividend payout process to maximize the
expected present value of the dividends until the time of
bankruptcy. However, if the dividend payout barrier is too low to be
acceptable, it may result in the company's bankruptcy soon. In order
to protect the shareholders' profits, the managements of the company
impose a reasonable and normal constraint on their dividend
strategy, that is, the bankrupt probability associated with the
optimal dividend payout barrier should be smaller than a given risk
level within a fixed time horizon. This paper aims at working out
the optimal control policy as well as optimal return function for
the company under  bankrupt probability constraint by stochastic
analysis, PDE methods and variational inequality approach. Moreover,
we establish a risk-based capital standard to ensure the capital
requirement of can cover the total given risk by numerical analysis
and give reasonable economic interpretation for the results.
 \vskip 10 pt
 \noindent {\bf MSC}(2000): Primary 91B30, 93E20, 65K10 ; Secondary
   60H05, 60H10.
 \vskip 10pt
 \noindent
 {\bf Keywords:} Regular-singular stochastic optimal control;
 Stochastic differential equations with reflection; Debt liability;
 Bankrupt probability constraints; Optimal dividend barrier;
 Dividend payout process.
 \end{abstract}
 \maketitle
\setcounter{equation}{0}
\section{{ {\bf Introduction}}}
 \vskip 10pt\noindent
In this paper, we study a model of a big financial company, which
has the possibility to choose various production/business policies
with different expected profits and associated risks. But at the
same time, the company has liability in which it has to pay out at a
constant rate no matter what the business plan is. The company
controls the business policy and dividend payout process to maximize
the expected present value of the dividends until the time of
bankruptcy. \vskip10pt \noindent Recently, there has been an upsurge
of interest in dealing with such optimal dividend control problems.
We refer readers to  Avanzi \cite{AV}(2009) and references therein,
H{\o}jgaard and Taksar\cite{s3, s4}(1999, 2001), Asmussen et
al.\cite{s7, s8}(1997, 2000), He and Liang\cite{ime02,
ime03}(2008,2009). For results in the model with debt liability, see
Choulli, Taksar and Zhou \cite{t1, t2, t3,s16, ime04}(2008, 2004,
2003, 2001, 2000). Guo, Liu and Zhou \cite{s1}(2004) is a
theoretical work on constrained nonlinear singular-regular
stochastic control problem. The optimal dividend payout for
diffusions with solvency constraints is solved in Paulsen
\cite{new23}(2003). According to Miller Modigliani, the managements
of the company choose the maximum of shareholders' return as their
goals. We see from these literatures that the optimality is achieved
by using an optimal dividend payout barrier $b$, i.e. whenever the
liquid reserve of the company goes above $b$, the excess is paid out
to the shareholders as dividends. However, the optimal dividend
barrier $b$ may be too low to be acceptable because this low
dividend payout barrier may result in the company's bankruptcy soon.
Thus, the company may be prohibited to pay dividends at such a low
level in order to avoid bankruptcy. So the managements of the
company impose some constraints on its dividends payout strategy.
One reasonable and normal constraint is that  the optimal dividend
barrier $b$  should be such that the bankrupt probability is not
larger than some predetermined risk level $\varepsilon$ within a
fixed time horizon $T$ and the cost for the safety is minimal.
\vskip10pt \noindent
 Based on the new idea, He, Hou and
Liang\cite{ime01}(2008) studied the linear regular-singular optimal
control problem of  insurance company with proportional reinsurance
policy under  bankrupt probability constraints as we state above.
They succeeded to find the optimal control policy under  bankrupt
probability constraints by proving the bankrupt probability is
decreasing in the dividend barrier and the existence of the dividend
barrier satisfying any given  bankrupt probability constraints.
Furthermore, Liang, Huang and Yao\cite{prep1,prep2, HL}(2010) gave a
exact result of such a control problem by proving the bankrupt
probability is continuous and strictly increasing w.r.t. the
dividend barrier $b$. These new results are mainly about the
insurance company with proportional reinsurance policy. Motivated by
these works, under any given bankrupt probability constraint, we are
interested in a common company, such as a big financial company,
insurance company, ...,  facing constant liability payments, which
controls the business policy and dividend payout process to maximize
the expected present value of the dividends until the time of
bankruptcy. Based on the relationship between those parameters that
govern the company's reserve process, we try to derive the optimal
control policy as well as optimal return function as well as a
risk-based capital standard to ensure the capital requirement of can
cover the total risk in several distinct cases of the qualitative
behavior of the company under some bankrupt probability constraints.
Moreover, we also give a robust analysis of the optimal return
function and the optimal dividend strategy w.r.t. the model
parameters and the constrained risk level of bankrupt probability.
 \vskip 10pt
\noindent The paper is organized as follows. In section 2, we
established a rigorous stochastic control  model of a big financial
company facing constant liability payments with some bankrupt
probability constraints. In section 3, we present main result of
this paper and its reasonable economic interpretation. In section 4,
we give risk analysis of the model we deal with to state the setting
treated in this paper is well defined and why we study such
regular-singular stochastic optimal control of the company. In
section 5, we give some numerical examples to portray how the debt
rate $\delta$, the constrained risk level of bankrupt probability,
the initial capital $x$ , the volatility rate $\sigma^2 $ and  the
profit rate $\mu$ impact on the optimal return function and the
optimal dividend strategy. We will list the properties of the
optimal return function and bankrupt probability in section 6. The
proof of main result will be given in section 7. The procedure of
solving the HJB equations and proofs of lemmas which are used to
prove the main result will be presented in the appendix.
 \vskip 10pt \noindent
\setcounter{equation}{0}
\section{{\small {\bf  Mathematical Model }}}
 \vskip 10pt\noindent
To give a mathematical formulation of our optimal control problem
treated in this paper, We start with a filtered probability space
$(\Omega, \mathcal {F}, \{ \mathcal {F}_{t}\}_{t\geq 0},
\mathbf{P})$ and a one-dimensional standard Brownian motion
$\{\mathcal {W}_t\}_{t\geq 0}$ on it, adapted to the filtration
$\mathcal{F}_t$. $\mathcal{F}_{t}$
 represents the information available at time $t$ and any decision
  made up to time $t$ is based on this information. For the
 { intuition} of our diffusion model we start from the classical
  Cram\'{e}r-Lundberg model of a reserve(risk) process. In this model
  claims arrive
according to a Poisson process $N_t$ with intensity $\lambda $ on
$(\Omega, \mathcal {F}, \{ \mathcal {F}_{t}\}_{t\geq 0},
\mathbf{P})$. The size of each claim is $X_i$. Random variables
$X_i$ are i.i.d. and are independent of the Poisson process $N_t$
with finite first and second moments given by $\mu$ and $ \sigma^2$
respectively. If there is no reinsurance and dividend payout, the
reserve (risk) process of insurance company is described by
$$ r_t= r_0 +pt-\sum^{N_t}_{i=1} X_i,$$
where $p$ is the premium rate. If $\eta > 0 $ denotes the {\sl
safety loading}, the $p$ can be calculated via the expected value
principle as
$$ p=(1+\eta ) \lambda \mu. $$
In a case where the insurance company shares risk with the
reinsurance, the sizes of the claims held by the insurer become
$X^{a}_i $, where $a$ is a (fixed) retention level.  For
proportional reinsurance, $a$ denotes the fraction of the claim
covered by the insurance company . Consider the case of {\sl cheap
reinsurance} for which the reinsuring company uses the same safety
loading as the insurance company, the reserve process of the
insurance company is given by
$$ r^{(a,\eta )}_t=u + p^{(a,\eta )}t - \sum^{N_t}_{i=1} X^{(a)}_i, $$
where
\begin{eqnarray*}
p^{(a,\eta)}=(1+\eta )\lambda \mathbb{E}\{X^{(a)}_i\}.
\end{eqnarray*}
Then by center limit theorem it is well known that for large enough
$\lambda $
\begin{eqnarray*}
\{r_{t}^{(a,\eta)}\}_{t\geq 0}\stackrel{d}{\approx} BM(\mu
at,\sigma^{2}a^2t).
\end{eqnarray*}
in $\mathcal{D}[0, \infty)   $ (the space of right continuous
functions with left limits endowed with the skorohod topology),
where $\mu=\eta\lambda E(X_i)$,  $\sigma =\sqrt{\lambda E(X_i^2)}$
and  $ BM ( \mu, \sigma^2)$ stands for Brownian motion with the
drift coefficient $\mu $ and diffusion coefficient $\sigma $ on
$(\Omega, \mathcal {F}, \{ \mathcal {F}_{t}\}_{t\geq 0},
\mathbb{P})$. So the passage to the limit works well in the presence
of a big portfolios, the  reserve (risk) process of the insurance
company can be approximated by the following diffusion process
\begin{eqnarray}\label{eq201} dR_{t}=\mu a(t)dt+\sigma
a(t)d\mathcal {W}_{t},
\end{eqnarray}
where  $a(t)$ denotes retention level. \noindent We refer readers
for this fact and for the specifies of the diffusion approximations
to Emanuel, Harrison and Taylor\\\cite{C1}(1975),
Grandell\cite{C2}(1977), Grandell\cite{C3}(1978),
Grandell\cite{C4}(1990), \\ Harrison\cite{C5}(1985),
Iglehart\cite{C6}(1969), Schmidli\cite{C7}(1994). \vskip
10pt\noindent In this paper, we consider a  common company which
faces constant liability payments. In view of the diffusion
approximations
 for the classical Cram\'{e}r-Lundberg model described above, we assume that
 the reserve
process of the company facing constant liability payments is given
by the following diffusion process
\begin{eqnarray}\label{202}
dR_t=(a(t)\mu-\delta)dt+a(t)\sigma d{\mathcal {W}}_t, \quad R_0=x
\end{eqnarray}
where $x$ is the initial reserve of the company, $\mu$ is the
expected profit per unit time (profit rate), $\sigma$ is the
volatility rate of the reserve process in the absence of any risk
control, $\delta$ represents the amount of money the company has to
pay per unit time (the debt rate) irrespective of what business
activities rate it chooses. In this model, the business activities
rate that the company chooses at time $t$ are modeled by $a(t)$,
which takes values in the interval $[\alpha, \beta]$, where
 $0<\alpha<\beta<+\infty$. The restriction $\alpha>0$ reflects the
fact that there are statutory reasons that its business activities
rate cannot be reduced to zero, unless the company faces bankruptcy.
 \vskip 10pt \noindent
In our model, a policy $\pi$ is a pair of non-negative
c\`{a}dl\`{a}g $ \mathcal {F}_{t}$-adapted  processes $\{a_\pi
(t),L_t ^\pi\}$, where $a_\pi (t)\in [\alpha, \beta]$ corresponds to
the risk exposure at time $t$ and $L_t ^ \pi$ corresponds to the
cumulative amount of dividend pay-outs distributed up to time $t$. A
policy $\pi =\{a_\pi (t),L_t ^\pi\} $ is called admissible if
$\alpha \leq a_\pi (t)\leq \beta $ and $L_t ^\pi$ is a nonnegative,
non-decreasing, right-continuous function. We denote $\Pi$ the set
of all admissible policies. When a admissible policy $\pi
 $ is applied, the resulting reserve process is denoted by $\{ R_t^\pi
 \}$. In view of (\ref{202}) we rewrite
  $R_t^\pi$ as follows
\begin{eqnarray}\label{203}
dR^\pi_t=(a_\pi(t)\mu-\delta)dt+a_\pi(t)\sigma d{\mathcal
{W}}_t-dL^\pi_t, \quad R^\pi_0=x,
\end{eqnarray}
where $x(>0)$ is the initial (capital) reserve. In addition, we
assume the company needs to keep its reserve above 0 to avoid
bankruptcy. For the given control policy $\pi$, we define the time
of bankruptcy as $\tau ^\pi _x=\inf\big \{t\geq0: R^\pi_{t}\leq
0\big \}$. $\tau ^\pi _x$ is an $\mathcal {F}_{t}$ -stopping time.
 \vskip 10pt \noindent
The objective of the company is to maximize the expected present
value of the dividends payout until the time of bankrupt by choosing
control policy $\pi$ from the admissible set $\Pi$. Choulli, Taksar
and Zhou\cite{t1}(2003) proved that there exists an optimal dividend
barrier $b_0$ and the optimal policy $\pi^*_{b_0}=
\{a_{\pi^*_{b_0}}(\cdot),L^{\pi^*_{b_0}}_\cdot\}$, which maximize
the expected present value of the dividends payout before
bankruptcy, i.e., $b_0$ is such that
 \begin{eqnarray}\label{204}
 J(x,\pi)=\mathbf{E}\big \{\int_0^{\tau^\pi_x } e^{-ct} dL_t^\pi\big\},
 \end{eqnarray}
 \begin{eqnarray}\label{205}
 V(x,b_0)=\sup_{\pi \in \Pi}\{J(x,\pi)\}=J(x,\pi^*_{b_0}),
 \end{eqnarray}
where $c$ denotes the discount rate. If the optimal dividend barrier
is too low, the bankrupt probability within a fixed time will be
bigger than a given level. This is not acceptable by the management
of the company. Taking the balance of profit and risk into
consideration, we impose small bankrupt probability constraint on
the company's control policy. We describe our optimal control
problem as follows.
\vskip 5pt\noindent
 Let $\Pi_b=\big\{\pi\in \Pi :
\int _0 ^{ \infty }{I}_{\{s:R^\pi(s)<b\}}dL_s^{\pi}=0\big\} $ for
$b\geq 0$ . Then it is easy to see that $\Pi=\Pi_0$ and
$b_1>b_2\Rightarrow \Pi_{b_1}\subset \Pi_{b_2}$. For a given
admissible policy $\pi$, we define the optimal return function
$V(x)$ by
\begin{eqnarray}\label{eq206}
J(x,\pi)&=&{\bf E}\big \{\int_0^{\tau^\pi_x } e^{-ct} dL_t^\pi\big\},\nonumber \\
 V(x,b )&=&\sup_{\pi \in \Pi_b}\{J(x,\pi)\},
\end{eqnarray}
\begin{eqnarray}\label{eq207}
 V(x)&=&\sup_{b \in \mathfrak{B}}\{  V(x,b )  \}
\end{eqnarray}
and the optimal policy $\pi^* $ by
\begin{eqnarray}\label{eq208}
J(x,\pi^*)= V(x),
\end{eqnarray}
 where
  \begin{eqnarray*}
 \mathfrak{B}:=\big \{b\ :\ \mathbb{P}[\tau_{b}^{\pi_{b}} \leq T] \leq
\varepsilon \ , \  J(x, \pi_b)= V(x,b) \mbox{ and} \ \pi_b \in
\Pi_b\big\},
  \end{eqnarray*}
 $c>0$ is a discount rate, $\tau_b^{\pi_b}$ is
 the time of bankruptcy $\tau_x^{\pi_b} $  when
the initial reserve $x=b$ and the control policy is $\pi_b$.
$\varepsilon$ is a given  constrained risk  level of bankrupt
probability. $1-\varepsilon$ is the standard of security and less
than solvency for a given risk level $\varepsilon>0$. $ \mathfrak{B}
$ is called the risk constrained set. \vskip 10pt\noindent The main
purpose of this paper is to derive the optimal return function
$V(x)$, the optimal policy $\pi^*$ as well as the optimal dividend
payout barrier $b^*$ and try to give a reasonable economic
interpretation and discuss effect of the debt rate $\delta$, the
constrained risk level $\varepsilon $ of bankrupt probability, the
initial capital $x$, the volatility rate $\sigma^2 $ and  the profit
rate $\mu$ on the optimal return function and the optimal dividend
strategy $\pi^*$.
 \vskip 5pt \noindent
 \setcounter{equation}{0}
\section{{{\bf Main Result}} }
 \vskip 5pt \noindent
 In this section we first present main result of this paper, then,
 together with numerical results in section 5 below,
  give its reasonable economic interpretation.
 The proof of the main result
  will be given in section 7.
\begin{theorem}\label{theorem31}
Let $\varepsilon\in (0,1)$ be the constrained risk level of bankrupt
probability and time horizon $T$ be given. \vskip 10pt\noindent(i)
If
 $\mathbf{P}[\tau_{b_0}^{\pi^*_{b_0}}\leq T]\leq \varepsilon$, then the
optimal return function $V(x)$ is $f(b_0, x)$ defined by (\ref{61})
below, and $V(x)=f(b_0, x)=J(x,\pi_{b_o}^\ast) $. The optimal policy
$\pi_{b_o}^\ast $ is
$\{a^\ast_{b_o}(R^{\pi_{b_o}^\ast}_t),L^{\pi_{b_o}^\ast}_t\}$, where
$\{R^{\pi_{b_o}^\ast}_t, L^{\pi_{b_o}^\ast}_t  \} $ is uniquely
determined  by the following stochastic differential
 equation
\begin{eqnarray}\label{301}
\left\{
\begin{array}{l l l}
dR_t^{\pi_{b_o}^\ast}=(a^*_{b_o}(R^{\pi_{b_o}^\ast}_t)\mu-\delta)dt+
\sigma a^*_{b_o}(R^{\pi_{b_o}^\ast}_t)d{\mathcal {W}}_{t}-dL_t^{\pi_{b_o}^\ast},\\
R_0^{\pi_{b_o}^\ast}=x,\\
0\leq R^{\pi_{b_o}^\ast}_t\leq  b_0,\\
\int^{\infty}_0 I_{\{t: R^{\pi_{b_o}^\ast}_t
<b_0\}}(t)dL_t^{\pi_{b_o}^\ast}=0.
\end{array}\right.
\end{eqnarray}
The solvency of the company is bigger than $1-\varepsilon$.
 \vskip 10pt\noindent
(ii) If $\mathbf{P}[\tau_{b_0}^{\pi^*_{b_0}}\leq T]>\varepsilon $,
there is a unique optimal dividend $b^\ast(\geq b_0)$ satisfying
$\mathbf{P}[\tau _{b^\ast}^{\pi_{b^*}^\ast}\leq T]= \varepsilon $.
The optimal return  function $V(x)$ is $g(x,b^*)$ defined by
(\ref{64}), that is,
\begin{eqnarray}\label{eq302}
V(x)= g(x,b^*) =\sup_{b\in \mathfrak{B}}\{V(x,b)\}
\end{eqnarray}
and
\begin{eqnarray}\label{eq303}
b^* \in \mathfrak{B}:=\big \{b: \mathbb{P}[\tau_{b}^{\pi^*_{b}} \leq
T] \leq \varepsilon, \  J(x, \pi^*_b)= V(x,b) \mbox{ and} \ \pi^*_b
\in \Pi_b\ \big\}.
\end{eqnarray}
The optimal policy $\pi_{b^*}^\ast=
 \{a^\ast_{b^*}(R^{\pi_{b^*}^\ast}_t),L^{\pi_{b^*}^\ast}_t\}$, where
$\{R^{\pi_{b^*}^\ast}_t, L^{\pi_{b^*}^\ast}_t  \} $ is uniquely
determined  by the following stochastic differential equation
\begin{eqnarray}\label{eq304}
\left\{
\begin{array}{l l l}
dR_t^{\pi_{b^*}^\ast}=(\mu a^*_{b^*}(R^{\pi_{b^*}^\ast}_t)-\delta
)dt+\sigma a^*_{b^*}(R^{\pi_{b^*}^\ast}_t)d
{W}_{t}-dL_t^{\pi_{b^*}^\ast},\\
R_0^{\pi_{b^*}^\ast}=x,\\
0\leq R^{\pi_{b^*}^\ast}_t\leq  b^*,\\
\int^{\infty}_0 I_{\{t: R^{\pi_{b^*}^\ast}_t
<b^*\}}(t)dL_t^{\pi_{b^*}^\ast}=0.
\end{array}\right.
\end{eqnarray}
The solvency of the company is $1-\varepsilon$. \vskip 10pt\noindent
(iii) Moreover,
\begin{eqnarray}\label{eq305}
\frac{g(x,b^*)}{g(x,b_0)}\leq 1,
 \end{eqnarray}
 where $a^*_b(x)$ is defined by (\ref{66}) below.
 \end{theorem}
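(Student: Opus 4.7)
The plan is to separate the three assertions and rely on the properties of the optimal return function $V(x,b)$ and of the bankrupt probability $b\mapsto \mathbf{P}[\tau_{b}^{\pi^*_{b}}\leq T]$ listed in section~6. The unconstrained optimum $b_0$ of Choulli--Taksar--Zhou is characterized by the HJB variational inequality whose candidate solution is exactly $f(b_0,x)=g(x,b_0)$, and $V(x,b)=g(x,b)$ holds for every feasible barrier $b$ via a standard verification argument (It\^o's formula applied to $g(R^\pi_t,b)e^{-ct}$, plus the boundary behaviour coming from the Skorohod reflection in \eqref{eq304}).

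For part (i), the hypothesis $\mathbf{P}[\tau_{b_0}^{\pi^*_{b_0}}\leq T]\leq \varepsilon$ says that $b_0\in\mathfrak{B}$. Since $V(x,b_0)=\sup_{\pi\in\Pi}J(x,\pi)\geq \sup_{\pi\in\Pi_b}J(x,\pi)=V(x,b)$ for every $b\geq 0$, the unconstrained optimizer remains optimal on the smaller feasible set $\mathfrak{B}$. The identification $V(x)=f(b_0,x)=J(x,\pi^*_{b_0})$ and the SDE \eqref{301} for $(R^{\pi^*_{b_0}}_t,L^{\pi^*_{b_0}}_t)$ then come directly from the unconstrained theorem of Choulli--Taksar--Zhou.

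For part (ii), the main work is to show existence and uniqueness of $b^*$. I will invoke the qualitative properties of section~6 (to be proved in the appendix): (a) the map $b\mapsto \Psi(b):=\mathbf{P}[\tau_b^{\pi^*_b}\leq T]$ is continuous and strictly increasing on $[b_0,\infty)$, with $\Psi(b)\to 1$ as $b\to\infty$; and (b) on $[b_0,\infty)$ the map $b\mapsto g(x,b)$ is continuous and strictly decreasing. Under the hypothesis $\Psi(b_0)>\varepsilon$, the intermediate value theorem applied to $\Psi$ produces a unique $b^*>b_0$ with $\Psi(b^*)=\varepsilon$, so $b^*\in\mathfrak{B}$. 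For any other $b\in\mathfrak{B}$ the constraint forces $\Psi(b)\leq\varepsilon=\Psi(b^*)$, hence $b\geq b^*$ by strict monotonicity of $\Psi$, and then $V(x,b)=g(x,b)\leq g(x,b^*)=V(x,b^*)$ by monotonicity of $g$ on $[b_0,\infty)$. This proves \eqref{eq302}--\eqref{eq303}; the SDE \eqref{eq304} is then the reflected SDE associated with the verification argument applied at the barrier $b^*$, and existence/uniqueness of the Skorohod pair $(R^{\pi_{b^*}^*},L^{\pi_{b^*}^*})$ is classical since $a^*_{b^*}(\cdot)$ defined in \eqref{66} is Lipschitz and bounded between $\alpha$ and $\beta$.

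Part (iii) is then essentially a corollary: in case (i) we have $b^*=b_0$ and equality in \eqref{eq305}; in case (ii) we have $b^*\geq b_0$, so the monotonicity of $g(x,\cdot)$ on $[b_0,\infty)$ gives $g(x,b^*)\leq g(x,b_0)$. The obstacle I anticipate is property (a) above --- proving strict monotonicity and continuity of $b\mapsto \Psi(b)$. The natural route is to couple the reflected diffusions for two barriers $b_1<b_2$ driven by the same Brownian motion, use the comparison and the explicit dynamics of \eqref{eq304} to get $\tau_{b_1}^{\pi^*_{b_1}}\leq \tau_{b_2}^{\pi^*_{b_2}}$ pathwise, and then argue strict inequality on an event of positive probability by exploiting the non-degeneracy $a^*_{b}(\cdot)\geq\alpha>0$ of the diffusion coefficient; continuity follows from dominated convergence applied along $b_n\downarrow b$ and $b_n\uparrow b$, together with the continuous dependence of the reflected SDE on the barrier. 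These technical lemmas are the ones that the paper defers to the appendix, and I would do the same.
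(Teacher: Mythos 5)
Your overall strategy coincides with the paper's: part (i) is the unconstrained Choulli--Taksar--Zhou result, part (ii) reduces to continuity and strict monotonicity of $\Psi(b):=\mathbf{P}[\tau_b^{\pi^*_b}\le T]$ together with monotonicity of $b\mapsto g(x,b)$ (the paper's Lemmas \ref{lemma63}, \ref{lemma64}, \ref{lemma66} and Theorem \ref{theorem71}), and part (iii) follows from $b^*\ge b_0$ and the monotonicity of $g$. However, your key property (a) is stated backwards, and as written it would make the argument collapse. You claim $\Psi$ is strictly \emph{increasing} on $[b_0,\infty)$ with $\Psi(b)\to 1$; in fact a higher barrier means the reflected process starts higher and retains more reserve before paying dividends, so $\Psi$ is strictly \emph{decreasing} (Lemma \ref{lemma64}) and $\Psi(b)\to 0$ as $b\to\infty$ (Theorem \ref{theorem42}). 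With your version of (a), the hypothesis $\Psi(b_0)>\varepsilon$ would give $\Psi(b)\ge\Psi(b_0)>\varepsilon$ for all $b\ge b_0$, so the intermediate value theorem produces no root of $\Psi(b)=\varepsilon$ above $b_0$ and no feasible barrier there at all: the existence of $b^*$ fails.

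The rest of your part (ii) is internally consistent only with the corrected (decreasing) version: from $\Psi(b)\le\varepsilon=\Psi(b^*)$ you conclude $b\ge b^*$, which requires $\Psi$ decreasing, and your proposed coupling yields $\tau_{b_1}^{\pi^*_{b_1}}\le\tau_{b_2}^{\pi^*_{b_2}}$ for $b_1<b_2$, i.e.\ again that the bankrupt probability decreases in $b$. So the inversion looks like a slip rather than a wrong method, but since the entire selection of $b^*$ --- why the constraint is active exactly at $b^*$, and why $b^*$ is the cheapest feasible barrier so that $g(x,b_0)-g(x,b^*)$ is the minimal cost of safety --- hinges on the direction of this monotonicity, you must state and prove the decreasing version with limit $0$ at infinity, exactly as the paper does in Lemma \ref{lemma64} and Theorem \ref{theorem42}. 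Everything else in your outline matches the paper's proof.
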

\vskip 10pt\noindent
 { \bf Economic interpretation of Theorem
\ref{theorem31} is as follows.}
 \vskip 5pt\noindent {\sl (1) For a
given  constrained risk level $  \varepsilon $  of bankrupt
probability
  and time horizon $T$, if the company's bankruptcy
probability
  is less than this given risk constraint level $\varepsilon$, the optimal
  control problem of
(\ref{eq206}) and (\ref{eq207}) is the traditional problem
(\ref{204}) and (\ref{205}). The  bankrupt probability constraints
here do not work. \vskip 10pt\noindent
 (2) If the company's bankruptcy probability
  is larger than this given constrained risk level $\varepsilon$,
    the traditional optimal
control policy fails to meet the requirement of  bankrupt
probability constraint. So the company has to find an optimal policy
$\pi_{b^*}^\ast $ to improve its solvency ability and ensure the
company  operates safely. The optimal reserve process $
R^{\pi_{b^*}^\ast}_t $ is a diffusion process reflected at a
dividend barrier $b^*$, and the process $L^{\pi_{b^*}^\ast}_t $ is
the dividend payout process that ensures the reflection. $a^*_{b^*}$
is the optimal feedback control function. The optimal policy means
that the company will pay out the amount of reserve in excess of
$b^*$ as dividend once its reserve is bigger than $b^*$. Under this
control policy, we can guarantee that the company's bankrupt
probability can stay below $\varepsilon$.
 \vskip 5pt\noindent
(3)The inequality (\ref{eq305}) shows that the optimal control
policy $\pi_{b^*}^\ast $ will decrease the company's profit-earning
capability. We can treat this decrease of the profit as the cost of
keeping the company's risk at a low level and the cost,
$g(x,b_0)-g(x, b^*)$, is minimal in view of ( \ref{eq303}), Lemma
\ref{lemma63} and Lemma \ref{64} below. Thus $\pi_{b^*}^\ast $ is
the best equilibrium control policy between making profit and
improving solvency. \vskip 10pt\noindent (4) From  the figure
\ref{g_de} in Example \ref{ex51} below we  see that the  optimal
return function $g(x)$ is decreasing w.r.t. the debt rate $\delta$.
Figure \ref{g_de} shows that the higher debt rate will lessen the
company's profit, so the company should keep its debt rate at a
appropriate level.\vskip 5pt\noindent
 (5)We can see from figure \ref{b_ep} in Example \ref{ex52} below that
the optimal dividend barrier $b^*$ is decreasing w.r.t. the
constrained  risk  level $  \varepsilon $  of bankrupt probability.
And the optimal dividend barrier $b^*$ is uniquely decided by
$\mathbf{P}[\tau _{b^\ast}^{\pi_{b^*}^\ast}\leq T]= \varepsilon $,
i.e., $1-\phi^{b^*}(b^*,b^*)=\varepsilon$ (see Lemma \ref{lemma65}
below). The the optimal dividend barrier $b^*$ is also increasing
function of the volatility $\sigma^2$ (see the figure \ref{b_si} in
Example \ref{ex53} below ). \vskip 5pt\noindent (6) We call
$R_0:=x_{b^*}(\varepsilon)$ the risk-based capital standard to
ensure the capital requirement of can cover the total given risk
$\varepsilon $, where $x_{b^*}(\varepsilon)$ is determined by
$1-\phi^{b^*}(x,b^*)=\varepsilon$ (see Lemma \ref{lemma65}).
 We see from the figure \ref{x_ep} in Example \ref{ex54} below that
 risk-based capital $x_{b^*}(\varepsilon)$ decreases with
 risk level $\varepsilon $. Since the optimal feedback control function
 $a^*_{b^*}(x)$ is increasing w.r.t. $x$, in view of comparison theorem
 for SDE, the constrained risk level $  \varepsilon $  lessens
the optimal business activities rate $a^*(t)$, but improves dividend
payout
 process $L^*_t(\varepsilon)$.
 It also lessens the optimal return function(see Example
\ref{ex57} below).
 \vskip
5pt\noindent (7) We can see from the figures \ref{g_mu} and
\ref{g_si} below that the optimal return function $g(x)$ is
increasing in both the profit rate $\mu$ and the volatility rate
$\sigma$. }
 \vskip 10pt \noindent
 \setcounter{equation}{0}
\section{{ {\bf Risk Analysis}} }
 \vskip 8pt \noindent
In this section, we proceed a risk analysis on the model we are
studying. We first work out the lower boundary of bankrupt
probability when we applied $b_0$ as the dividend barrier. It proves
that the risk constrained set $\mathfrak{B}$ is not $\Re_+=[0,
+\infty )$. So the topic of this paper is fundamental to studying
the optimal control problem under bankrupt probability constraints.
It also states that the company has to find optimal policy to
improve its solvency.
\begin{Them}\label{theorem41}
Let
 $\{R^{\pi_{b_0}^\ast}_t, L^{\pi_{b_0}^\ast}_t  \} $ be the unique solution of the
following SDE( see Lions and Sznitman \cite{s10100}(1984))
\begin{eqnarray}\label{eq401-theorem41}
\left\{
\begin{array}{l l l}
dR_t^{\pi_{b_o}^\ast}=(\mu a^*_{b_o}(R^{\pi_{b_o}^\ast}_t)-\delta
)dt+\sigma a^*_{b_o}(R^{\pi_{b_o}^\ast}_t)d
{W}_{t}-dL_t^{\pi_{b_o}^\ast},\\
R_0^{\pi_{b_o}^\ast}=b_0,\\
0\leq R^{\pi_{b_o}^\ast}_t\leq  b_0,\\
\int^{\infty}_0 I_{\{t: R^{\pi_{b_o}^\ast}_t
<b_0\}}(t)dL_t^{\pi_{b_o}^\ast}=0.
\end{array}\right.
\end{eqnarray}
Then
\begin{eqnarray}\label{eq402-theorem41} {\bf P}(\tau_{b_0}^{\pi^*_{b_0}}\leq
T)&\geq&
\frac{4[1-\Phi(\frac{b_0}{\alpha\sigma\sqrt{T}})]^2}{\exp\{
\frac{T}{\sigma^2}{\max\{\mu-\frac{\delta}{\beta},|\mu-\frac{\delta}{\alpha}|\}}^2\}}
 \nonumber\\&\equiv& \varepsilon_0(b_0, \mu, \delta, \sigma, T, \alpha, \beta)
>0,
\end{eqnarray}
where $\tau ^{\pi^*_{b_o}} _{b_o}=\inf\big \{t\geq0:
R^{\pi^{*}_{b_o}}_{t}\leq 0\big \}$ and $\Phi(\cdot)$ is the
standard normal distribution function.
\end{Them}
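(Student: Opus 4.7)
The plan is to reduce the desired lower bound to a hitting probability for a driftless reflected diffusion via Girsanov's theorem, estimate this probability by means of the Dambis--Dubins--Schwarz time change and the reflection principle for Brownian motion, and finally transfer the estimate back to $\mathbf{P}$ through the Cauchy--Schwarz inequality. The appearance of a squared Brownian tail and an exponential factor in (4.3) suggests exactly such a ``Cauchy--Schwarz after Girsanov'' structure.

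First I would introduce the Girsanov kernel $\theta_s:=[\mu a^*_{b_0}(R^{\pi^*_{b_0}}_s)-\delta]/[\sigma a^*_{b_0}(R^{\pi^*_{b_0}}_s)]$. Because $a^*_{b_0}(\cdot)\in[\alpha,\beta]$ with $\alpha>0$, a pathwise check shows $\sigma|\theta_s|\le K:=\max\{\mu-\delta/\beta,\,|\mu-\delta/\alpha|\}$, so $\theta$ is bounded; hence Novikov's criterion applies and $M_T:=\exp\{-\int_0^T\theta_s dW_s-\tfrac12\int_0^T\theta_s^2 ds\}$ defines a probability measure $\tilde{\mathbf P}$ via $d\tilde{\mathbf P}/d\mathbf P=M_T$. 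Under $\tilde{\mathbf P}$, $\tilde W_t:=W_t+\int_0^t\theta_s ds$ is a Brownian motion and (4.1) reduces to the driftless reflected equation $dR^{\pi^*_{b_0}}_t=\sigma a^*_{b_0}(R^{\pi^*_{b_0}}_t)d\tilde W_t-dL^{\pi^*_{b_0}}_t$. Writing $A:=\{\tau^{\pi^*_{b_0}}_{b_0}\le T\}$ and applying Cauchy--Schwarz to the identity $\tilde{\mathbf P}(A)=\tilde{\mathbf E}[M_T^{1/2}\cdot M_T^{-1/2}\mathbf 1_A]$ yields $\mathbf P(A)\ge\tilde{\mathbf P}(A)^2/\mathbf E[M_T^2]$; factoring $M_T^2=\exp\{\int_0^T\theta_s^2 ds\}\cdot\exp\{-2\int_0^T\theta_s dW_s-2\int_0^T\theta_s^2 ds\}$, recognising the second factor as a mean-one exponential martingale, and using $\int_0^T\theta_s^2 ds\le TK^2/\sigma^2$, gives $\mathbf E[M_T^2]\le\exp\{TK^2/\sigma^2\}$, which is exactly the denominator of (4.3).

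Next I would lower-bound $\tilde{\mathbf P}(A)$. Set $N_t:=\int_0^t\sigma a^*_{b_0}(R^{\pi^*_{b_0}}_s)d\tilde W_s$, a continuous $\tilde{\mathbf P}$-martingale with $\langle N\rangle_T\ge\sigma^2\alpha^2 T$ pathwise. The Skorohod decomposition for the upper reflection at $b_0$ yields $L^{\pi^*_{b_0}}_t=\sup_{s\le t}N_s^+$ and $R^{\pi^*_{b_0}}_t=b_0+N_t-\sup_{s\le t}N_s^+$; in particular, whenever $N_s\le -b_0$ for some $s\le T$ one has $R^{\pi^*_{b_0}}_s\le 0$, so $\{\min_{s\le T}N_s\le -b_0\}\subseteq A$. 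By Dambis--Dubins--Schwarz, on a possibly enlarged filtered probability space there is a standard $\tilde{\mathbf P}$-Brownian motion $B$ with $N_t=B_{\langle N\rangle_t}$ a.s., and the deterministic inequality $\langle N\rangle_T\ge\sigma^2\alpha^2 T$ gives the pathwise bound $\min_{s\le T}N_s=\inf_{u\le\langle N\rangle_T}B_u\le\inf_{u\le\sigma^2\alpha^2 T}B_u$. The classical reflection principle then produces $\tilde{\mathbf P}(\inf_{u\le\sigma^2\alpha^2 T}B_u\le-b_0)=2[1-\Phi(b_0/(\alpha\sigma\sqrt T))]$, so $\tilde{\mathbf P}(A)\ge 2[1-\Phi(b_0/(\alpha\sigma\sqrt T))]$. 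Combining the three inequalities delivers (4.3).

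The main obstacle will be the third step, namely justifying the Skorohod identity for the upper-reflected diffusion under $\tilde{\mathbf P}$ and constructing the Dambis--Dubins--Schwarz Brownian motion $B$ on a common probability space so that the pathwise event inclusion used above is truly meaningful; once these measurability and coupling details are in place, the remaining computations are routine. A secondary checkpoint is the verification that the expression $\max\{\mu-\delta/\beta,|\mu-\delta/\alpha|\}$ in the statement really coincides with the pathwise sup of $\sigma|\theta_s|$ obtained by letting $a^*_{b_0}$ range over $[\alpha,\beta]$, which follows by an elementary case analysis on the sign of $\mu-\delta/\beta$.
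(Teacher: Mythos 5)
Your proposal is correct and follows the same overall architecture as the paper's proof: Girsanov removal of the drift, a time change reducing the driftless diffusion to Brownian motion run for at least time $\alpha^2\sigma^2T$, the reflection principle, and finally Cauchy--Schwarz in the form $\mathbf P(A)\ge \tilde{\mathbf P}(A)^2/\mathbf E[M_T^2]$ with $\mathbf E[M_T^2]\le\exp\{TK^2/\sigma^2\}$. The one genuine difference is how the dividend reflection at $b_0$ is handled. The paper introduces an auxiliary \emph{unreflected} diffusion $X$ with $X_0=b_0$, invokes a comparison theorem to get $R^{\pi^*_{b_0}}_t\le X_t$ a.s., and then runs Girsanov and the time change on $X$; you instead apply Girsanov directly to the reflected process and use the Skorohod representation $R_t=b_0+N_t-\sup_{s\le t}N_s^+$ to deduce $\{\inf_{s\le T}N_s\le -b_0\}\subseteq\{\tau\le T\}$. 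Your route buys something: it avoids the comparison theorem for a reflected versus an unreflected SDE (a point the paper passes over quickly), and the ``main obstacle'' you flag --- rigorously justifying the Skorohod identity and the DDS coupling --- is in fact lighter than you fear, since the event inclusion only needs $L_t\ge 0$ (so that $R_s\le b_0+N_s$ pathwise), not the full explicit formula for $L$, while the DDS construction on an enlarged space is standard and the inequality $\langle N\rangle_T\ge\sigma^2\alpha^2T$ is deterministic. Your closing check that $\sup_{a\in[\alpha,\beta]}|\mu-\delta/a|=\max\{\mu-\delta/\beta,|\mu-\delta/\alpha|\}$ under the standing assumption $\beta\mu\ge\delta$ is exactly the verification needed to match the constant in the statement.
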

\begin{proof}
Since $a^*_{b_o}(x)$ (defined by (\ref{66})) is a bounded Lipschitz
continuous function w.r.t. $x$, the following SDE
\begin{eqnarray}\label{eq403-theorem41}
\left\{
\begin{array}{l l l}
dX_t=(\mu a^*_{b_0}(X_t)-\delta )dt+\sigma a^*_{b_0}(X_t)d
{W}_{t},\\
X_0=b_0\\
\end{array}\right.
\end{eqnarray}
has a unique solution $X_t$. By comparison theorem for
one-dimensional It\^{o} process, we have
\begin{eqnarray}\label{eq404-theorem41}
{\bf{P}}[R^{\pi_{b_o}^\ast}_t \leq X_t]=1
\end{eqnarray}
Define a measure ${\bf Q}$ on $\mathcal{ F}_T  $ by
\begin{eqnarray}\label{eq405-theorem41}
 d{\bf{Q}}(\omega)=M(T)d{\bf P}(\omega )\label{eq405-theorem41}
\end{eqnarray} where
\begin{eqnarray*}
M(t)&\equiv&\exp\big\{-\int_{0}^{t}\frac{(\mu
a^*_{b_0}(X_t)-\delta)}{\sigma
a^*_{b_0}(X_t)}dW_{s}\\
&-& \frac{1}{2}\int_{0}^{t}\frac{(\mu
a^*_{b_0}(X_t)-\delta)^{2}}{[\sigma a^*_{b_0}(X_t)]^{2}}ds\big \}.
\end{eqnarray*}
Since $\{ M(t)\}$ is a martingale w.r.t. $\mathcal {F}_t$, we have
${\bf E} \big [ M(T) \big ] =1$. Moreover, noticing that
$a^*_{b_o}(X_t)$ belongs to $[\alpha, \beta]$, we obtain
\begin{eqnarray}\label{eq406-theorem41}
{\bf{E}}^{\bf{P}}[M(T)^{2}]\leq \exp\{
\frac{T}{\sigma^2}{\max\{\mu-\frac{\delta}{\beta},|\mu-\frac{\delta}{\alpha}|\}}^2\}
\end{eqnarray}
Using Girsanov theorem,  ${\bf Q}$ is a probability measure on
$\mathcal {F}_T$ and the process $\{X_t\}$ satisfies the following
SDE
\begin{eqnarray}\label{eq407-theorem41}
dX_t=\sigma a^*_{b_0}(X_t)d\tilde{W}_t,\  X_0=b_0
\end{eqnarray}
where $\tilde{W}_t = W_t+\int_{0}^{t}\frac{(\mu
a^*_{b_0}(X_s)-\delta)}{\sigma a^*_{b_0}(X_s)}d{s}$. It is easy to
see that $\tilde{W}_t$
 is a Brownian motion  on $(\Omega, \mathcal {F}, \{ \mathcal {F}_{t}\}_{t\geq 0},
{\bf Q})$.
\vskip 5pt\noindent Define a time changes $\rho(t)$ by
\begin{eqnarray}\label{eq408-theorem41}
\dot{\rho}(t)=\frac{1}{{a^*_{b_0}}^2(X_t)\sigma^2},
\end{eqnarray}
 then $\rho(t)$ is a
strictly increasing function. If we denote $X_{\rho(t)}$ by
$\hat{X}_t$, then we have
\begin{eqnarray*}
\hat{X}_t=b_0+\hat{W}_t.
\end{eqnarray*}
 Noticing
that $0<\alpha \leq {a^*_{b_0}}(R_t) \leq \beta < +\infty$, we get
\begin{eqnarray}\label{eq409-theorem41}
\frac{1}{\beta^2\sigma^2}\leq \dot{\rho}(t)\leq
\frac{1}{\alpha^2\sigma^2}.
\end{eqnarray}
Due to the fact $\rho(t)=\int_{0}^{t}\dot{\rho(s)}ds$, we can deduce
that $\rho(t)\leq \frac{t}{\alpha^2\sigma^2}$ and $\rho^{-1}(t) \geq
\alpha^2\sigma^2 t$, where $\rho^{-1}$ denotes the inverse of
$\rho$. Then we have
\begin{eqnarray}\label{eq410-theorem41}
{\bf{Q}}[\inf\{t: X_t\leq 0\}\leq T]&=&{\bf{Q}}[\inf\{t:
\hat{X}_{\rho^{-1}(t)}\leq 0\}\leq
T]\nonumber\\&=&{\bf{Q}}[\inf\{\rho(t): b_0+\hat{W}_t\leq 0\}\leq T]\nonumber\\
&=&{\bf{Q}}[\inf \{t: \hat{W}_t\leq - b_0 \}\leq
\rho^{-1}(T)]\nonumber\\&\geq& {\bf{Q}}[\inf\{t: \hat{W}_t\leq
-b_0\}\leq \alpha^2\sigma^2
T]\nonumber\\&=&2[1-\Phi(\frac{b_0}{\alpha \sigma\sqrt{T}})]>0.
\end{eqnarray}
Using H\"{o}lder inequalities as well as (\ref{eq405-theorem41}),
\begin{eqnarray}\label{eq411-theorem41}
{\bf{Q}}[\inf\{t: X_t\leq 0\}\leq T]&=&
\int_{\Omega}{\bf{1}}_{[\inf\{t: X_t\leq 0\}\leq T]}d\bf{Q}(\omega)
\nonumber\\&=&\int_{\Omega}{\bf{1}}_{[\inf\{t: X_t\leq 0\}\leq T]}M_Td{\bf{P}}(\omega)\nonumber\\
&=& {\bf{E}}^{{\bf{P}}}[M_T {\bf{1}}_{[\inf\{t: X_t\leq 0\}\leq T]}]
\\&\leq &
{\bf{E}}^{{\bf{P}}}[M^2_T]^{\frac{1}{2}}{\bf{P}}[\inf\{t: X_t\leq
0\}\leq T]^{\frac{1}{2}}.\nonumber
\end{eqnarray}
Substituting (\ref{eq406-theorem41}) and (\ref{eq410-theorem41})
into (\ref{eq411-theorem41}), we get
\begin{eqnarray*}
{\bf{P}}[\inf\{t: X_t\leq 0\}\leq T] &\geq & \frac{{\bf{Q}}[\inf\{t:
X_t\leq 0\}\leq T]^2}{{\bf{E}}^{\bf{P}}[M_T^{2}]} \\
&\geq& \frac{4[1-\Phi(\frac{b_0}{\alpha\sigma\sqrt{T}})]^2}{\exp\{
\frac{T}{\sigma^2}{\max\{\mu-\frac{\delta}{\beta},|\mu-\frac{\delta}{\alpha}|\}}^2\}}.
\end{eqnarray*}
By virtue of (\ref{eq404-theorem41}), we have
\begin{eqnarray}\label{eq412-theorem41}
{\bf P}[\tau_{b_0}^{\pi^*_{b_0}}\leq T] &=& {\bf{P}}[\inf \{t:
R^{\pi^{*}_{b_o}}_{t}\leq 0\}\leq T] \nonumber\\ &\geq&
{\bf{P}}[\inf\{t: X_t\leq 0\}\leq T]\nonumber\\ &\geq&
\frac{4[1-\Phi(\frac{b_0}{\alpha\sigma\sqrt{T}})]^2}{\exp\{
\frac{T}{\sigma^2}{\max\{\mu-\frac{\delta}{\beta},|\mu-\frac{\delta}{\alpha}|\}}^2\}}\nonumber\\
&\equiv& \varepsilon_0(b_0, \mu, \delta, \sigma, T, \alpha, \beta)\nonumber\\
&>& 0.
\end{eqnarray}
\end{proof}

\noindent {\bf  The economic interpretation of  theorem
 \ref{theorem41} is the following.}\vskip 10pt \noindent
 Assume $\frac{2\delta}{\mu}<\alpha$, then we have
 $\mu-\frac{\delta}{\alpha}>0$. So the lower  boundary of the
 bankrupt probability $\varepsilon_0(b_0, \mu, \delta, \sigma, T, \alpha,
 \beta)$ becomes to $\frac{4[1-\Phi(\frac{b_0}{\alpha\sigma\sqrt{T}})]^2}{\exp\{
\frac{T}{\sigma^2}(\mu-\frac{\delta}{\beta})^2\}}$. Based on the
assumption, we have the following explanations.\vskip 10pt \noindent
 {\sl   (1)\ The lower  boundary of bankrupt probability for the company
 $\varepsilon_0(b_0, \mu, \delta, \sigma,\\ T, \alpha, \beta)$
is an increasing function of $(\sigma, \delta, \alpha)$, which means
 that higher volatility rate $\sigma$ and debt rate $\delta$ will make the company
 face larger risk. In addition, risk will increase as the lower
 boundary $\alpha$ of control function $a(x)$ increases.
 \vskip 10pt
\noindent(2) \ The lower  boundary of bankrupt probability for the
company
 $\varepsilon_0(b_0, \mu, \delta, \sigma,\\ T, \alpha, \beta)$ is decreasing
in $ ( b_0, \mu, \beta )$, which means that paying dividends at a
lower barrier will cause larger bankrupt probability. On the other
hand, the higher the profit rate is, the lower the risk is.
Improving the upper boundary $\beta $ of the control function $a(x)$
can also reduce the company's risk. \vskip 10pt \noindent (3) The
company has a positive bankrupt probability within the time interval
$[0,T]$ if we set $b_0$ as the dividends barrier. In order to keep
the company's risk at a low level, we need adjust our control policy
and find the optimal dividends barrier $b^*$ under lower constrained
risk level of  bankrupt probability. } \vskip 10pt\noindent The
second result is the following, which states that the risk
constrained set $\mathfrak{B} $ defined in section 2 is non-empty
for any $ \varepsilon >0$, together with the first result,  also
guarantees our problem (\ref{eq206}), (\ref{eq207}), (\ref{eq208})
is well defined.
\begin{Them}\label{theorem42}
Let $( R^{\pi_{b}^\ast}_t, L_t^{\pi_{b}^\ast}    )$ be defined by
\begin{eqnarray}\label{eq401-theorem42}
\left\{
\begin{array}{l l l}
dR_t^{\pi_{b}^\ast}=(\mu a^*_{b}(R^{\pi_{b}^\ast}_t)-\delta
)dt+\sigma a^*_{b}(R^{\pi_{b}^\ast}_t)d
{W}_{t}-dL_t^{\pi_{b}^\ast},\\
R_0^{\pi_{b}^\ast}=b,\\
0\leq R^{\pi_{b}^\ast}_t\leq  b,\\
\int^{\infty}_0 I_{\{t: R^{\pi_{b}^\ast}_t
<b\}}(t)dL_t^{\pi_{b}^\ast}=0,
\end{array}\right.
\end{eqnarray}
and $\tau_{b}^{\pi_b^*}=\inf\{t\geq 0: R^{\pi_{b}^\ast}_t<0 \}$.
Then
\begin{eqnarray}\label{eq402-theorem42}
\lim_{b\rightarrow  \infty}{\bf P}[ \tau_{b}^{b}\leq T]=0.
\end{eqnarray}
\end{Them}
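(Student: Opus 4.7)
The plan is to run the Girsanov-plus-time-change device from the proof of Theorem~\ref{theorem41} in reverse, trading the lower bound there for an upper bound that vanishes as $b\to\infty$. Introduce the equivalent measure $\mathbf{Q}$ on $\mathcal{F}_T$ by $d\mathbf{Q}/d\mathbf{P}=M(T)$ with
\[
M(T)=\exp\Big\{-\int_0^T\frac{\mu a^{*}_{b}(R^{\pi_b^*}_s)-\delta}{\sigma a^{*}_{b}(R^{\pi_b^*}_s)}\,dW_s-\frac{1}{2}\int_0^T\frac{(\mu a^{*}_{b}(R^{\pi_b^*}_s)-\delta)^2}{\sigma^2 a^{*}_{b}(R^{\pi_b^*}_s)^2}\,ds\Big\},
\]
i.e. (\ref{eq405-theorem41}) but with the reflected process $R^{\pi_b^*}$ in place of $X$. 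Since $\alpha\le a^{*}_{b}\le\beta$, the integrand is bounded by a constant that does \emph{not} depend on $b$, so $M$ is a true martingale, and exactly the same pointwise calculation that yields (\ref{eq406-theorem41}) also gives
\[
\mathbf{E}^{\mathbf{Q}}[M(T)^{-2}]=\mathbf{E}^{\mathbf{P}}[M(T)^{-1}]\le K:=\exp\Big\{\tfrac{T}{\sigma^2}\max\{(\mu-\tfrac{\delta}{\alpha})^2,(\mu-\tfrac{\delta}{\beta})^2\}\Big\},
\]
with $K$ independent of $b$.

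Under $\mathbf{Q}$ the SDE (\ref{eq401-theorem42}) reduces to $dR^{\pi_b^*}_t=\sigma a^{*}_{b}(R^{\pi_b^*}_t)\,d\tilde W_t-dL^{\pi_b^*}_t$ for a $\mathbf{Q}$-Brownian motion $\tilde W$, still reflected at the upper barrier $b$ and started at $b$. Time-change by the inverse $\tau(s)$ of the quadratic variation $\langle R^{\pi_b^*}\rangle_t=\int_0^t\sigma^2 a^{*}_{b}(R^{\pi_b^*}_u)^2\,du$, and set $\hat R_s:=R^{\pi_b^*}_{\tau(s)}$, $\hat L_s:=L^{\pi_b^*}_{\tau(s)}$. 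By uniqueness for the Skorohod problem~\cite{s10100}, $(\hat R,\hat L)$ is a Brownian motion reflected at $b$ starting at $b$, so $b-\hat R_s\stackrel{d}{=}|B_s|$ for a standard $\mathbf{Q}$-Brownian motion $B$. Since $a^{*}_{b}\le\beta$ forces $\langle R^{\pi_b^*}\rangle_T\le\sigma^2\beta^2 T$, the reflection principle yields
\[
\mathbf{Q}[\tau_b^{\pi_b^*}\le T]\le\mathbf{Q}\Big[\sup_{0\le s\le\sigma^2\beta^2 T}|B_s|\ge b\Big]\le 4\big[1-\Phi\big(b/(\sigma\beta\sqrt{T})\big)\big].
\]

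Returning to $\mathbf{P}$ via the Cauchy--Schwarz inequality under $\mathbf{Q}$,
\[
\mathbf{P}[\tau_b^{\pi_b^*}\le T]=\mathbf{E}^{\mathbf{Q}}\big[M(T)^{-1}\mathbf{1}_{[\tau_b^{\pi_b^*}\le T]}\big]\le \mathbf{E}^{\mathbf{Q}}[M(T)^{-2}]^{1/2}\,\mathbf{Q}[\tau_b^{\pi_b^*}\le T]^{1/2}\le 2\sqrt{K}\,\big[1-\Phi\big(b/(\sigma\beta\sqrt T)\big)\big]^{1/2},
\]
whose right-hand side tends to $0$ as $b\to\infty$, proving (\ref{eq402-theorem42}). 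The main technical obstacle is the rigorous identification of $(\hat R,\hat L)$ as a reflected Brownian motion at $b$: the singular term $L^{\pi_b^*}$, supported on $\{R^{\pi_b^*}=b\}$, must be shown to transport under the strictly increasing continuous time change $\tau$ to a valid reflection process $\hat L$ supported on $\{\hat R=b\}$, which is handled through uniqueness of the Skorohod problem. Every other estimate is routine book-keeping; what makes the argument work is precisely the uniformity in $b$ of both $K$ and the deterministic time horizon $\sigma^2\beta^2 T$, so that only the Gaussian tail $1-\Phi(b/(\sigma\beta\sqrt T))$ drives the limit.
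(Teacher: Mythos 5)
Your proof is correct, but it takes a genuinely different route from the paper. The paper's argument is elementary: it first uses the comparison theorem to replace the starting point $b$ by $\sqrt b$ (so that the distances to both the ruin level $0$ and the barrier $b$ tend to infinity), then dominates the ruin probability by the probability that the \emph{unreflected} process started at $\sqrt b$ exits $(0,b)$ before $T$, and kills both exit probabilities by Chebyshev's inequality combined with the Burkholder--Davis--Gundy bound on the martingale part, as in (\ref{eq407-theorem42}) and (\ref{eq410-theorem42}). You instead recycle the Girsanov-plus-time-change machinery of Theorem \ref{theorem41} in the reverse direction, working directly with the reflected process: after removing the drift and time-changing by the quadratic variation, the Skorohod/L\'evy identification turns $b-\hat R_s$ into $|B_s|$, and the uniform-in-$b$ bound on $\mathbf{E}^{\mathbf{P}}[M(T)^{-1}]$ (the exact analogue of (\ref{eq406-theorem41})) lets Cauchy--Schwarz transport the Gaussian tail back to $\mathbf{P}$. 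The technical point you flag -- that the singular reflection term survives the time change -- is genuine but handled correctly by uniqueness of the Skorohod problem together with Dambis--Dubins--Schwarz. What your approach buys is a quantitative rate: an explicit bound of order $[1-\Phi(b/(\sigma\beta\sqrt T))]^{1/2}$, i.e.\ decay like $e^{-b^2/(4\sigma^2\beta^2 T)}$, which is far sharper than the paper's $O(b^{-1})$ bound from (\ref{eq410-theorem42}); it also produces a clean two-sided companion to the lower bound of Theorem \ref{theorem41}. What the paper's route buys is that it avoids the measure change and the time change of a reflected diffusion entirely, at the cost of the slightly artificial $\sqrt b$ device and a weaker rate.
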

\begin{proof}

For any $b\geq 1$, we have $b\geq \sqrt b$. By comparison theorem on
SDE (see Ikeda and Watanabe \cite{IW}(1981)), we have
\begin{eqnarray}\label{eq403-theorem42}
  \mathbf{P}[\tau_{b}^{\pi^*_{b}}\leq T]\leq \mathbf{P}[\tau_{\sqrt b}^{\pi^*_{b}}\leq T].
\end{eqnarray}
Let $R^{(1)}_t$ satisfy the following SDE,
\begin{eqnarray}\label{eq404-theorem42}
\left\{
\begin{array}{l l l}
  dR_t^{(1)}=(a^*(R_t^{(1)})\mu -\delta)dt + a^*(R_t^{(1)})\sigma d\mathcal {W}_t,\\
  R_0^{(1)}=\sqrt b.
  \end{array}\right.
\end{eqnarray}
Then, we have
\begin{eqnarray}\label{eq405-theorem42}
 \mathbf{P}[\tau_{\sqrt b}^{\pi^*_b}\leq T]
  &\leq&\mathbf{P}[R_t^{(1)}=0 \ \mbox{or}\ R_t^{(1)}
  =b\ \mbox{for some $0\leq t\leq T$ }]\nonumber\\
  &\leq& \mathbf{P}[\sup_{0\leq t \leq T} R_t^{(1)}\geq
  b]+\mathbf{P}[\inf_{0\leq t \leq T} R_t^{(1)}\leq 0].
\end{eqnarray}
Next, we estimate $\mathbf{P}[\sup_{0\leq t \leq T} R_t^{(1)}\geq
  b]$ and $\mathbf{P}[\inf_{0\leq t \leq T} R_t^{(1)}\leq 0]$, respectively.
H\"{o}lder's inequality and $a^*(x)\leq \beta$ yield that
\begin{eqnarray}\label{eq406-theorem42}
   \sup_{0\leq t \leq T} (R_t^{(1)})^2 &\leq&
   3 (\sqrt b)^2 +3 \sup_{0\leq t \leq T}(\int_0^t (a^*(R_s^{(1)})\mu -\delta)ds)^2+
   \nonumber\\
   &&3 \sup_{0\leq t \leq T}(\int_0^t a^*(R_t^{(1)})\sigma d\mathcal {W}_s)^2\nonumber \\
   &\leq& 3b+ 3(\beta \mu-\delta)^2T^2+ \nonumber\\
   &&3 \sup_{0\leq t \leq T}(\int_0^t a^*(R_t^{(1)})\sigma d\mathcal {W}_s)^2.
\end{eqnarray}
By Markov inequality, B-D-G inequalities and
(\ref{eq406-theorem42}), we obtain
\begin{eqnarray}\label{eq407-theorem42}
   \mathbf{P}[\sup_{0\leq t \leq T} R_t^{(1)}\geq b] &\leq&
  \frac{\mathbf{E}[\sup_{0\leq t \leq T} (R_t^{(1)})^2]}{b^2}\nonumber \\
  &\leq& \frac{3b+3(\beta\mu-\delta)^2T^2+12{\beta}^2{\sigma}^2T}{b^2}\nonumber\\
  & \rightarrow &0, \ \ \mbox{as}\  b \rightarrow  \infty.
\end{eqnarray}
Now we turn to estimating $\mathbf{P}[\inf_{0\leq t \leq T}
R_t^{(1)}\leq 0]$. Let $R_t^{(2)}$ satisfy the following SDE
\begin{eqnarray}\label{eq408-theorem42}
\left\{
\begin{array}{l l l}
  dR_t^{(2)}=(a^*(R_t^{(1)})\mu -\delta)dt + a^*(R_t^{(1)})\sigma d\mathcal {W}_t,\\
  R_0^{(2)}=0.
  \end{array}\right.
\end{eqnarray}
Thus we have
\begin{eqnarray}\label{eq409-theorem42}
  R_t^{(1)} = \sqrt b +R_t^{(2)}.
\end{eqnarray}
Therefore, by using the same argument as in (\ref{eq407-theorem42})
, we get
\begin{eqnarray}\label{eq410-theorem42}
   \mathbf{P}[\inf_{0\leq t \leq T} R_t^{(1)}\leq 0] &=&
    \mathbf{P}[\inf_{0\leq t \leq T} R_t^{(2)}\leq -\sqrt b]\nonumber\\
    &=&\mathbf{P}[\sup_{0\leq t \leq T}(-R_t^{(2)})\geq \sqrt b]\nonumber\\
    &\leq &\frac{\mathbf{E}[\sup_{0\leq t \leq T}(-R_t^{(2)})^2]}{(\sqrt b)^2}
    \nonumber\\
    &\leq& \frac{2(\beta \mu-\delta)^2T^2+8{\beta}^2{\sigma}^2T}{b}
    \nonumber\\
    & \rightarrow &0, \ \ \mbox{as}\  b \rightarrow  \infty.
\end{eqnarray}
Hence, (\ref{eq403-theorem42}), (\ref{eq405-theorem42}),
(\ref{eq407-theorem42}) and (\ref{eq410-theorem42}) yield that
\begin{eqnarray*}
  \lim \limits_{b\rightarrow \infty}\mathbf{P}[\tau_{b}^{\pi^*_{b}}\leq T]=0.
\end{eqnarray*}
\end{proof}
\setcounter{equation}{0}
\section{{\small {\bf  Numerical examples}} }
\vskip 10pt \noindent In this section, we present  some numerical
examples to give the readers an intuitive impression on the
relations between the results and  model parameters. Setting the
parameters at  suitable level, we  portray how the debt rate
$\delta$,  the  constrained risk level of bankrupt probability, the
initial capital $x$ , the volatility rate $\sigma^2 $ and  the
profit rate $\mu$ impact on the optimal return function and the
optimal dividend strategy based on the PDE (\ref{67}) below. we also
show the figures  of the optimal return function $g(x)$ and the
associated optimal feedback control function $a^*(x)$.
\begin{example}\label{ex51}
Let $\mu=2$, $\sigma^2=50$, $c=0.05$, $\alpha=0.5$, $\beta=8$,
$T=300$ and $b=100$. Figure \ref{g_de} shows that the optimal return
function $g_\delta(x)$ decreases with the debt rate $\delta$.
\end{example}
\begin{figure}[H]
\includegraphics[width=0.7\textwidth]{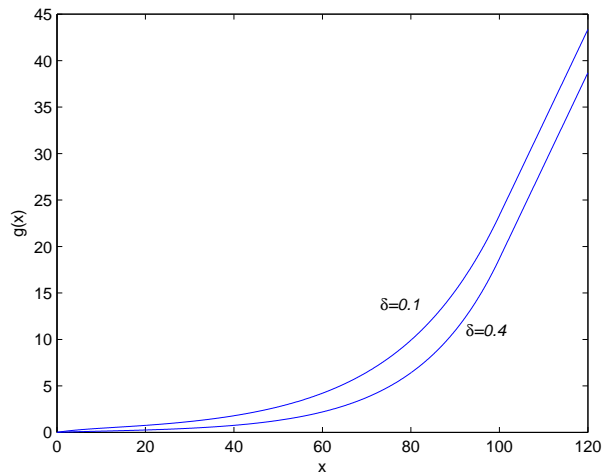}
\caption{ The optimal return function $g_\delta(x)$ as a function of
$\delta$.\ \  (Parameters: $ \mu=2, \sigma=50, c=0.05, \alpha=0.5,
\beta=8, T=300, b=100 $ )}\label{g_de}
\end{figure}
\begin{example}\label{ex52}
Let $\mu=2$, $\sigma^2=50$, $\delta=0.2$, $c=0.05$, $\alpha=0.5$,
$\beta=8$, $T=300$. Let $b(\varepsilon)$ be the solution of
$1-\phi(T,b)=\varepsilon$, where $\phi(T,b)$ is defined in Lemma
\ref{lemma65}. Thus given a  constrained risk level $\varepsilon$ of
bankrupt probability, $b(\varepsilon)$ is the associated dividends
barrier. Figure \ref{b_ep} shows that the dividends barrier
$b(\varepsilon)$ decreases with the constrained  risk level
$\varepsilon$.
\end{example}
\begin{figure}[H]
\includegraphics[width=0.7\textwidth]{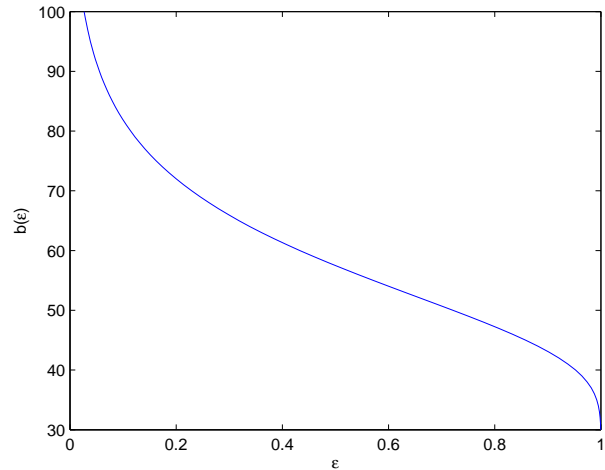}
\caption{ Dividends barrier $b(\varepsilon)$ as a function of
$\varepsilon$.\  (Parameters: $\mu=2, \sigma^2=50, \delta=0.2,
c=0.05, \alpha=0.5, \beta=8, T=300$ )}\label{b_ep}
\end{figure}
\begin{example}\label{ex53}
Let $\mu=2$, $\delta=0.2$, $c=0.05$, $\alpha=0.5$, $\beta=8$,
$T=300$. Let $b_\sigma(\varepsilon)$ be the solution of
$1-\phi(T,b)=\varepsilon$, where $\phi(T,b)$ is defined in Lemma
\ref{lemma65}. We see from Figure \ref{b_si} that at the same
constrained risk level, the bigger the volatility rate $\sigma$ is,
the higher the dividends barrier $b_\sigma(\varepsilon)$ is.
\end{example}
\begin{figure}[H]
\includegraphics[width=0.7\textwidth]{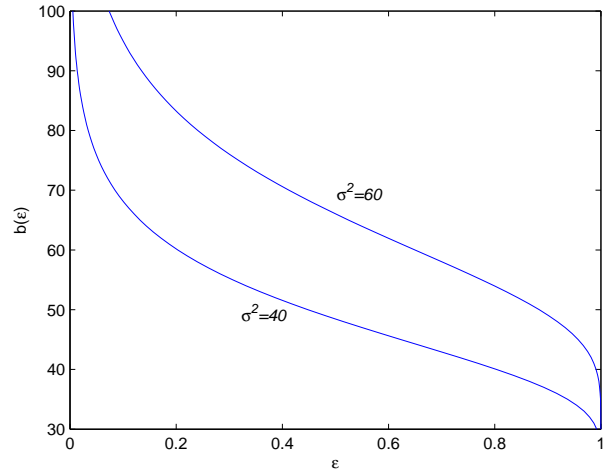}
\caption{ Dividends barrier $b_\sigma(\varepsilon)$ as a function of
$\sigma^2$.\ \ (Parameters: $\mu=2,  \delta=0.2, c=0.05, \alpha=0.5,
\beta=8, T=300$ )}\label{b_si}
\end{figure}
\begin{example}\label{ex54}
Let $\mu=2$, $\sigma^2=50$ $\delta=0.2$, $c=0.05$, $\alpha=0.5$,
$\beta=8$, $T=300$. Let $b_\sigma(\varepsilon)$ be the solution of
$1-\phi(T,b)=\varepsilon$, where $\phi(T,b)$ is defined in Lemma
\ref{lemma65}. $R_0=x$ is the initial reserve and $\varepsilon$ is
the constrained risk  level of bankrupt probability. We see from
figure \ref{x_ep} that the lower the initial reserve $x$ is, the
higher the constrained risk  level $\varepsilon$ is.
\end{example}
\begin{figure}[H]
\includegraphics[width=0.7\textwidth]{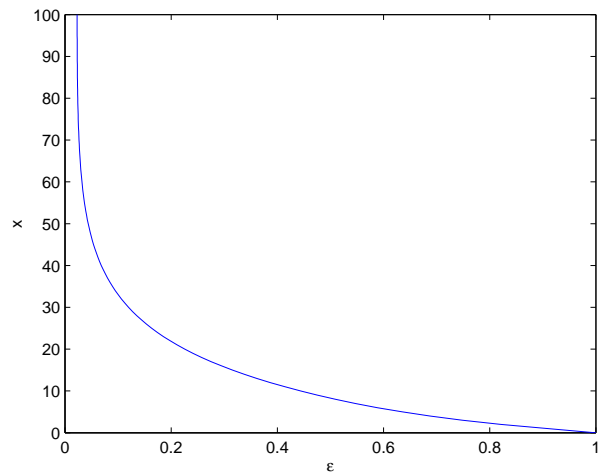}
\caption{ Initial reserve $x(\varepsilon)$ as a function of  the
risk restrained level $\varepsilon$. \ (Parameters: $\mu=2,
\sigma^2=50, \delta=0.2, c=0.05, \alpha=0.5, \beta=8, T=300$
)}\label{x_ep}
\end{figure}
\begin{example}\label{ex55}
Let $\sigma^2=50$, $\delta=0.2$, $c=0.05$, $\alpha=0.5$, $\beta=8$,
$T=300$ and $b=100$. Figure \ref{g_mu} shows that the optimal return
function $g_\mu(x)$ increases with the profit rate $\mu$.
\end{example}
\begin{figure}[H]
\includegraphics[width=0.7\textwidth]{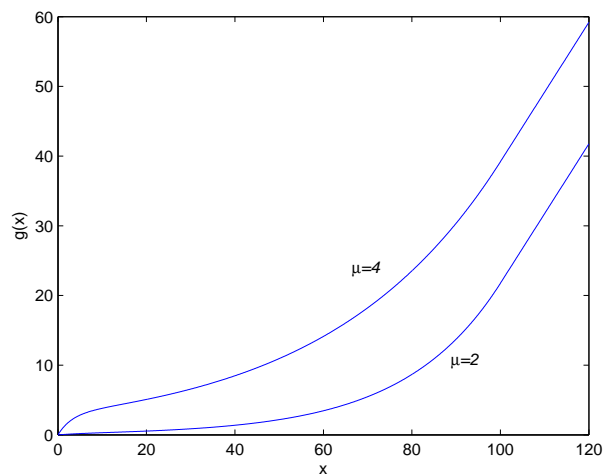}
\caption{ The optimal return function $g_\mu(x)$ as a function of
$\mu$.\ \  (Parameters: $ \sigma^2=50, \delta=0.2, c=0.05,
\alpha=0.5, \beta=8, T=300, b=100 $ )}\label{g_mu}
\end{figure}
\begin{example}\label{ex56}
Let $\mu=2$, $\delta=0.2$, $c=0.05$, $\alpha=0.5$, $\beta=8$,
$T=300$ and $b=100$. Figure \ref{g_si} shows that the optimal return
function $g_\sigma(x)$ increases with the volatility rate
$\sigma^2$.
\end{example}
\begin{figure}[H]
\includegraphics[width=0.7\textwidth]{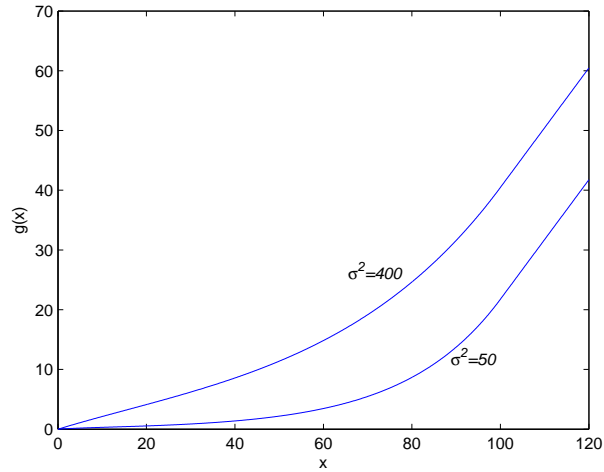}
\caption{ The optimal return function $g_\sigma(x)$ as a function of
$\sigma^2$.\ \  (Parameters: $ \mu=2, \delta=0.2, c=0.05,
\alpha=0.5, \beta=8, T=300, b=100 $ )}\label{g_si}
\end{figure}
\begin{example}\label{ex57}
Let $\mu=2$, $\sigma^2=50$, $\delta=0.2$, $c=0.05$, $\alpha=0.5$,
$\beta=8$, $T=300$ and $b=100$. Set $x_\alpha=4.72$,
$x_\beta=94.79$,  the images of the optimal return function $g(x)$
as well as the optimal feedback control function $a^*(x)$ are as
follows (see Figure \ref{g_x} and Figure \ref{a_x}).
\end{example}
\begin{figure}[H]
\includegraphics[width=0.7\textwidth]{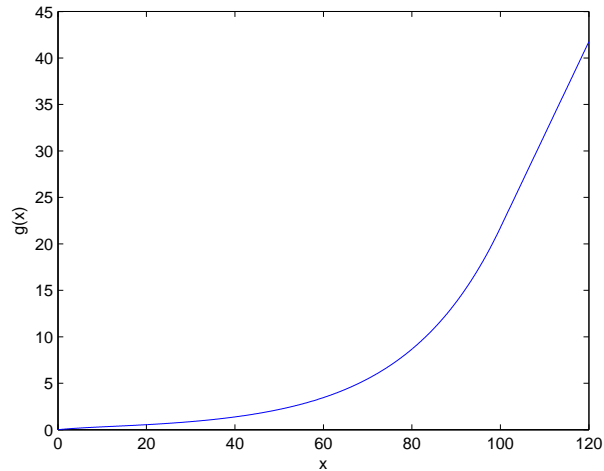}
\caption{ The optimal return function $g(x)$. \ \ (Parameters:
$\mu=2, \sigma^2=50, \delta=0.2, c=0.05, \alpha=0.5, \beta=8, T=300,
b=100 $ )}\label{g_x}
\end{figure}
\begin{figure}[H]
\includegraphics[width=0.7\textwidth]{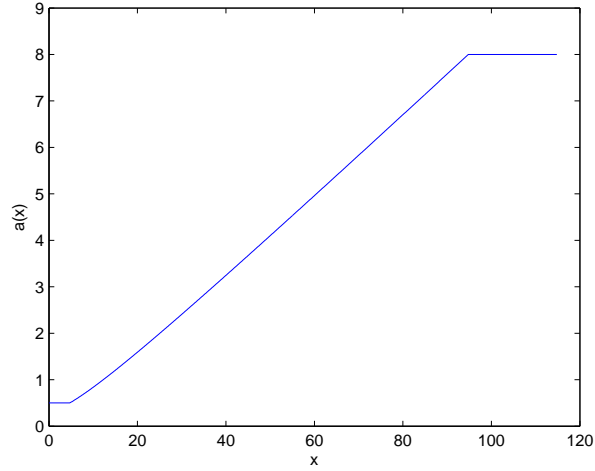}
\caption{ The optimal feedback control function $a(x)$.\ \
(Parameters: $\mu=2, \sigma^2=50, \delta=0.2, c=0.05, \alpha=0.5,
\beta=8, T=300, b=100 $ )}\label{a_x}
\end{figure}
\setcounter{equation}{0}
\section{{{\bf Properties of $V(x,b)$ and Bankrupt Probability }} }
\vskip 5pt \noindent In this section, we will discuss some important
properties of the optimal return function $V(x,b)$ and bankrupt
probability, which are used to prove the main result of this paper.
The rigorous proofs of these properties will be given in the
appendix.  In view of Lemma \ref{lemma81} in the appendix, different
value of $\frac{2\delta}{\mu}$ can lead to three different cases.
When $\frac{2\delta}{\mu}<\alpha$, this case is the most
complicated. We select this case as the basis of our discussion
throughout the paper, and the results of the other two cases are
almost same.
\begin{lemma}\label{lemma61} If $f(x)\in C^{2}$ and
satisfies the following HJB equation and boundary
conditions,
\begin{eqnarray}\label{61}
\left\{
\begin{array}{l l l}
\max\limits_{a\in[\alpha,\beta]}[\frac{1}{2}\sigma^{2}a^{2}f^{''}(x)+
(\mu a-\delta)f^{'}(x)-cf(x)]=0,\ \mbox{for}\  0\leq x\leq b_{0},\\
f^{'}(x)=1, \ \mbox{ for}\   x\geq b_{0},\\
f^{''}(x)=0, \ \mbox{ for} \  x\geq b_{0},\\
f(0)=0,
\end{array}\right.
\end{eqnarray}
then we have
\begin{eqnarray*}
b_0 = \inf \{x\geq 0: f''(x)=0 \}
\end{eqnarray*}
and
\begin{eqnarray*}
\left\{
\begin{array}{l l l}
\max\mathcal {L}f(x)\leq 0\ \mbox{and }\  f^{'}(x)\geq 1 \
\mbox{for}\ x \geq0,\\
f(0)=0,
\end{array}\right.
\end{eqnarray*}
where   $\mathcal {L}= \frac{1}{2}\sigma^{2}a^{2}\frac{d^{2}}{d
x^{2}}+(\mu a-\delta)\frac{d}{d x}-c$.
\end{lemma}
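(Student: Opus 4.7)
The plan is to split the domain at $b_0$ and treat the two regions $[b_0,\infty)$ and $[0,b_0]$ separately, then reduce everything to a single claim: $f$ is strictly concave on $[0,b_0)$. From strict concavity all three conclusions of the lemma follow easily.

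On $[b_0,\infty)$, the prescription $f'(x)=1$, $f''(x)=0$ integrates to $f(x)=x-b_0+f(b_0)$. To fix the constant $f(b_0)$, I would take the limit $x\to b_0^-$ in the HJB equation and use the continuity of $f,f',f''$ (which comes from $f\in C^2$), so that the $\tfrac12\sigma^2 a^2 f''$ term drops out and the HJB collapses to $\max_{a\in[\alpha,\beta]}[(\mu a-\delta)-cf(b_0)]=0$, giving $f(b_0)=(\mu\beta-\delta)/c$. A direct substitution then yields $\mathcal Lf(x)=\mu(a-\beta)-c(x-b_0)$, so $\max_a\mathcal Lf(x)=-c(x-b_0)\le 0$ and $f'(x)=1\ge 1$ on this half-line. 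On $[0,b_0]$ the HJB gives $\max_a\mathcal Lf=0$ by hypothesis, so the remaining statements $f'(x)\ge 1$ and $b_0=\inf\{x\ge 0:f''(x)=0\}$ reduce, via $f'(b_0)=1$, $f''(b_0)=0$ and the fact that a strictly decreasing $f'$ stays $\ge 1$, to proving strict concavity $f''(x)<0$ on $[0,b_0)$.

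The main obstacle is this concavity claim. My approach is to analyze the pointwise maximizer $a^*(x)$ of the quadratic-in-$a$ expression in the HJB. For the maximum to be finite, either (i) the leading coefficient $\tfrac12\sigma^2 f''(x)$ is negative, in which case $a^*(x)=-\mu f'(x)/(\sigma^2 f''(x))\in(\alpha,\beta)$ and $f''(x)<0$ is automatic; or (ii) $a^*(x)\in\{\alpha,\beta\}$ is at an endpoint. In case (ii), substituting the constant value of $a^*$ reduces the HJB to the linear ODE $\tfrac12\sigma^2(a^*)^2 f''+(\mu a^*-\delta)f'-cf=0$, whose characteristic equation has two real roots $r_1>0>r_2$, and imposing the terminal data $f'(b_0)=1$, $f''(b_0)=0$ fixes the coefficients $A>0$, $B<0$ of the general solution $Ae^{r_1 x}+Be^{r_2 x}$; a short calculation then shows $f'''(b_0)>0$ and hence $f''(x)<0$ immediately to the left of $b_0$, while the monotonicity of $Ae^{r_1 x}+Be^{r_2 x}$ ensures $f''$ has only the one zero at $b_0$ in this regime. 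The remaining work is to paste these regimes together across $[0,b_0)$ using the continuity of $f,f',f''$ and the explicit construction of $f$ carried out in the appendix; the nuisance lies in tracking the transitions between the $a^*=\alpha$, interior, and $a^*=\beta$ regions, which is where most of the bookkeeping will sit.
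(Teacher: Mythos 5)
Your outline is the natural verification route, and it is essentially what the paper relies on: the paper itself gives no proof of this lemma but defers entirely to Choulli, Taksar and Zhou (2003), and your sketch parallels the paper's own appendix treatment of the analogous Lemma \ref{lemma62} for $g$. The pieces you do carry out are correct: continuity of $f''$ at $b_0$ collapses the HJB to $\max_a[(\mu a-\delta)-cf(b_0)]=0$, giving $f(b_0)=(\mu\beta-\delta)/c$; the verification $\max_a\mathcal Lf(x)=-c(x-b_0)\le 0$ on $[b_0,\infty)$ is right; and in the $a^*=\beta$ regime the terminal data do force $A>0$, $B<0$, hence $f'''(b_0)=Ar_1^3e^{r_1b_0}+Br_2^3e^{r_2b_0}>0$ and a single zero of $f''$ at $b_0$.

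The genuine gap is that strict concavity on all of $[0,b_0)$ \emph{is} the lemma, and your argument only establishes it on the $a^*=\beta$ interval $[x_\beta,b_0)$ adjacent to the free boundary. Two things are missing. First, the dichotomy ``for the maximum to be finite'' is spurious: the maximum over the compact set $[\alpha,\beta]$ is always finite, so a point with $f''(x)>0$ and an endpoint maximizer is not excluded by your case split; it has to be ruled out by actually constructing $f$ regime by regime. Second, the $a^*=\alpha$ regime near $0$ is anchored by the condition $f(0)=0$ rather than by data at $b_0$, so the sign analysis is genuinely different there: one needs $f(x)=k_1(e^{r_+(\alpha)x}-e^{r_-(\alpha)x})$ with $k_1>0$ and $r_+(\alpha)+r_-(\alpha)=-2(\mu\alpha-\delta)/(\sigma^2\alpha^2)<0$ (which already forces the case hypothesis $2\delta/\mu<\alpha$ of Lemma \ref{lemma81}), plus the monotonicity of the interior maximizer $a(x)$ on $[x_\alpha,x_\beta)$ to show the three regimes occur in the stated order and join at well-defined switching points. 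That is exactly the content of equations (\ref{86})--(\ref{820}) in the appendix and of the cited reference; calling it ``bookkeeping'' understates that without it the concavity claim, and hence $f'\ge 1$ and the identification of $b_0$, is unproven.
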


\begin{lemma}\label{lemma62}
Let $b>b_{0}$ be a predetermined variable. If $g\in C^{1}(R_+)$,
$g\in C^{2}(R_+\setminus \{b\})$ and satisfies the following HJB
equation and boundary conditions,
\begin{eqnarray}\label{62}
\left\{
\begin{array}{l l l}
\max\limits_{a\in[\alpha,\beta]}[\frac{1}{2}\sigma^{2}a^{2}g^{''}(x)+
(\mu a-\delta)g^{'}(x)-c g(x)]=0,\ \mbox{for}\ 0\leq x\leq b,\\
g^{'}(x)=1, \ \ \mbox{ for}\ \   x\geq b,\\
g^{''}(x)=0, \ \ \mbox{ for} \ \  x>b,\\
g(0)=0,
\end{array}\right.
\end{eqnarray}
then we have
\begin{eqnarray}\label{63}
\left\{
\begin{array}{l l l}
\max\mathcal {L}g(x)\leq 0,\  \mbox{for}\ \ x \geq
0,\\
g^{'}(x)\geq 1, \ \mbox{for}\ \  x \geq
b,\\
g(0)=0,
\end{array}\right.
\end{eqnarray}
where $b_0$  and $\mathcal {L}$ are the same as in Lemma
\ref{lemma61}, $g''(b):=g''(b-)$. The expression of $g(x)$ can be
written as
\begin{eqnarray}\label{64} g(x,b)=\left\{
   \begin{array}{l l l}
    k_1(e^{r_+(\alpha)}x-e^{r_-(\alpha)x}),\ 0\leq x<x_\alpha,\\
   k_2[\frac{\alpha \mu-2\delta}{2c} + \int_{x_\alpha}^x exp(-\frac{\mu}
     {\sigma^2} \int_{x_\alpha}^y\frac{dv}{a(v)})dy],\ x_\alpha \leq x < x_\beta, \\
   k_3 e^{r_+(\beta)(x-b_0)}+ k_4 e^{r_-(\beta)(x-b_0)}, \ x_\beta \leq x<b,\\
   x-b+g_3(b), \ x\geq b,
   \end{array} \right.
\end{eqnarray}
where $r_{\pm}(x), x_\alpha, x_\beta, k_1, k_2, k_3$ and $k_4$ are
given by (\ref{87}), (\ref{88}), (\ref{815}), (\ref{819}),
(\ref{827}) and (\ref{826}), respectively.
\end{lemma}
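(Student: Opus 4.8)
The plan is to prove Lemma~\ref{lemma62} in two stages: first build $g$ explicitly by solving the HJB system (\ref{62}) region by region, which yields the formula (\ref{64}); then check that this $g$ satisfies the variational inequalities (\ref{63}).

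For the construction I would start from the inner maximisation in (\ref{62}). Regarding the bracket as a quadratic $\psi(a)=\tfrac12\sigma^2 g''(x)a^2+\mu g'(x)a-(\delta g'(x)+cg(x))$ in $a$ on a region where $g$ is concave, its maximiser over $\R$ is $a^\circ(x)=-\mu g'(x)/(\sigma^2 g''(x))>0$. Since $a^\circ$ increases in $x$ up to the barrier, there are two switching points $x_\alpha<x_\beta$, where $a^\circ=\alpha$ and $a^\circ=\beta$ respectively, splitting $[0,b]$ into a regime in which the optimiser is clamped to $\alpha$, one in which it equals $a^\circ$, and one in which it is clamped to $\beta$; on $[b,\infty)$ the boundary data of (\ref{62}) force $g'\equiv1$, $g''\equiv0$. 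In each regime the equation becomes linear and is solved explicitly: on $[0,x_\alpha)$ the solution of $\tfrac12\sigma^2\alpha^2g''+(\mu\alpha-\delta)g'-cg=0$ with $g(0)=0$ is $k_1(e^{r_+(\alpha)x}-e^{r_-(\alpha)x})$; on $[x_\alpha,x_\beta)$, substituting $a=a^\circ$ back into the equation gives $g'(x)=g'(x_\alpha)\exp(-\tfrac{\mu}{\sigma^2}\int_{x_\alpha}^x a(v)^{-1}dv)$, which integrates to the middle branch of (\ref{64}) once the normalisation $g(x_\alpha)=k_2(\alpha\mu-2\delta)/(2c)$, $g'(x_\alpha)=k_2$ is fixed so that $a^\circ(x_\alpha)=\alpha$ holds automatically; on $[x_\beta,b)$ the solution of $\tfrac12\sigma^2\beta^2g''+(\mu\beta-\delta)g'-cg=0$ is $k_3e^{r_+(\beta)(x-b_0)}+k_4e^{r_-(\beta)(x-b_0)}$; and on $[b,\infty)$, $g(x)=x-b+g_3(b)$. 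The six unknowns $k_1,k_2,k_3,k_4,x_\alpha,x_\beta$ are then pinned down by $C^1$-matching at $x_\alpha$ and $x_\beta$ (four equations, with $C^2$-matching there automatic since the middle branch already satisfies the $\alpha$-, resp.\ $\beta$-, equation at the junctions), by the defining relation $a^\circ(x_\beta)=\beta$, and by the reflection condition $g'(b-)=1$; solving this system is exactly the content of (\ref{87}), (\ref{88}), (\ref{815}), (\ref{819}), (\ref{826}), (\ref{827}) in the appendix, and $b>b_0$ guarantees $0<x_\alpha<x_\beta<b$.

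To prove (\ref{63}), the assertions $g(0)=0$ and $g'(x)=1\ge1$ for $x\ge b$ are immediate from the construction. It remains to show $\max\mathcal{L}g(x)\le0$ for all $x\ge0$. On $[0,b]$ (using $g''(b):=g''(b-)$) this is precisely the HJB identity (\ref{62}): in each sub-regime the value of $a$ used is the maximiser of $\psi$, so the maximum equals $0$. For $x>b$ we have $g''(x)=0$, $g'(x)=1$, hence $\mathcal{L}g(x)=(\mu a-\delta)-cg(x)$, whose maximum over $a\in[\alpha,\beta]$ is attained at $a=\beta$; since $g$ is increasing, $\max\mathcal{L}g(x)\le(\mu\beta-\delta)-cg(b)$ for all $x\ge b$, so it suffices to prove $cg(b)\ge\mu\beta-\delta$. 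Evaluating the $\beta$-equation at $b-$ with $g'(b-)=1$ gives $cg(b)=(\mu\beta-\delta)+\tfrac12\sigma^2\beta^2g''(b-)$, so this is equivalent to $g''(b-)\ge0$.

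The main obstacle is therefore the convexity-at-the-barrier statement $g''(b-)\ge0$, equivalently $g(b)\ge(\mu\beta-\delta)/c$. I would read this off the explicit constants (\ref{826})--(\ref{827}), using $b>b_0$ in an essential way: by Lemma~\ref{lemma61}, $f''(b_0)=0$ and $f'(b_0)=1$, so evaluating its HJB equation at $b_0$ gives $cf(b_0)=\mu\beta-\delta$; thus $b_0$ is exactly the point at which the $\beta$-equation solution has vanishing curvature, and raising the barrier to $b>b_0$ forces the marginal value $g'$ to dip below $1$ and climb back to $1$ at $b$, i.e.\ $g'$ is nondecreasing near $b$ and $g''(b-)\ge0$; equivalently $g(b)=g_3(b)\ge f(b_0)=(\mu\beta-\delta)/c$. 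A self-contained alternative is a maximum-principle argument: differentiating the $\beta$-equation shows $v:=g'-1$ satisfies $\tfrac12\sigma^2\beta^2v''+(\mu\beta-\delta)v'-cv=c>0$ on $[x_\beta,b]$ with $v(b)=0$, and since this operator has negative zero-order coefficient $v$ cannot attain a positive interior maximum, so together with $v(x_\beta)\le0$ (again a consequence of $b>b_0$) we get $v\le0$ on $[x_\beta,b]$, whence $v'(b-)=g''(b-)\ge0$. Once $g''(b-)\ge0$ is in hand, (\ref{63}) follows and the lemma is proved.
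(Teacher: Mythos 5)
Your overall route coincides with the paper's: construct $g$ region by region via the clamped/interior maximiser $a^\circ(x)=-\mu g'/(\sigma^2 g'')$ with smooth fit at $x_\alpha$, $x_\beta$ and the normalisation $g'(b-)=1$, then observe that on $[0,b]$ the inequality $\max\mathcal{L}g\le 0$ is just the HJB identity, and that for $x\ge b$ it reduces (exactly as in (\ref{833})) to the single claim $g''(b-)\ge 0$. That reduction is correct and is the paper's Step 4.

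The genuine gap is that you never actually prove $g''(b-)\ge 0$. Your primary argument ("raising the barrier to $b>b_0$ forces $g'$ to dip below $1$ and climb back to $1$ at $b$, so $g'$ is nondecreasing near $b$") asserts precisely the conclusion that needs proof; the paper supplies the missing computation by writing $l(b):=g''(b-)$ explicitly via the constants $A,B$ of (\ref{826}), deducing $A<0$, $B>0$ from the system (\ref{829}) that $f$ satisfies at $b_0$, and then checking $\partial l/\partial b>0$ in (\ref{830})--(\ref{831}) together with $l(b_0)=0$. Your "self-contained alternative" via the maximum principle is worse than incomplete: its boundary hypothesis $v(x_\beta)=g'(x_\beta)-1\le 0$ is false in general. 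Indeed, on $[x_\beta,b]$ one has $g=(k_2/k_2^f)\tilde f$ where $\tilde f$ is the extension of $f$'s third branch, so $g'(x_\beta)=\tilde f'(x_\beta)/\tilde f'(b)$; as $b\downarrow b_0$ this tends to $f'(x_\beta)>1$, so $v(x_\beta)>0$ for $b$ near $b_0$, and $v$ in fact changes sign on $(x_\beta,b)$ (positive near $x_\beta$, negative near $b$). Hence the conclusion "$v\le 0$ on $[x_\beta,b]$" is false and that argument cannot be repaired as stated; you need either the paper's monotonicity computation for $l(b)$ or an argument that $\tilde f''>0$ on $(b_0,b]$ (e.g.\ from $\tilde f''(b_0)=0$, $(\tilde f'')'(b_0)=2c/(\sigma^2\beta^2)>0$ and the ODE for $\tilde f''$) to close the proof.
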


\begin{lemma}\label{lemma63} Let $g(x,b)$ be as the same as in Lemma
\ref{lemma62}. Then $\frac{\partial}{\partial b}g(x,b)\leq 0$  holds
for $b\geq b_0$.
\end{lemma}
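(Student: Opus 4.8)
The plan is to show that $w(x,b):=\frac{\partial}{\partial b}g(x,b)$ is non-positive for $b\ge b_0$ by exploiting the PDE characterization of $g(\cdot,b)$ in Lemma \ref{lemma62} and a maximum-principle/comparison argument. First I would differentiate the defining relations (\ref{62}) with respect to the parameter $b$. On the interval $0\le x\le b$ the function $g(\cdot,b)$ satisfies $\mathcal L^{a^*(x)} g(x,b)=0$, where $a^*(x)=a^*_b(x)$ is the maximizer (given by (\ref{66})); by the envelope theorem the derivative of the maximized Hamiltonian in $b$ only picks up the explicit $b$-dependence, so $w$ solves the linear equation $\frac12\sigma^2 a^*(x)^2 w''(x)+(\mu a^*(x)-\delta)w'(x)-c\,w(x)=0$ on $0<x<b$, together with $w(0)=0$ (from $g(0,b)=0$). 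The boundary behaviour at $x=b$ is the crux: from the smooth-fit/reflection conditions $g_x(b,b)=1$ and $g_{xx}(b-,b)=0$, differentiating in $b$ and using the chain rule gives a relation for $w(b,b)$ and $w_x(b,b)$ in terms of $g_{xx}$ and $g_{xxx}$ at $b$. I expect this to yield $w_x(b-,b)=-g_{xx}(b-,b)=0$ and $w(b,b)$ determined by $g_{xxx}(b-,b)$, whose sign must be controlled.

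Next, the plan is to pin down the sign of $g_{xxx}(b-,b)$, equivalently the sign of $w$ at the barrier. Since $b>b_0$, on a left-neighbourhood of $b$ the optimizer is $a^*\equiv\beta$ and $g$ has the explicit exponential form $k_3 e^{r_+(\beta)(x-b_0)}+k_4 e^{r_-(\beta)(x-b_0)}$ from (\ref{64}); using the two boundary conditions at $b$ one solves for $k_3,k_4$ explicitly and computes $g''$ and $g'''$ at $b-$. From $\mathcal L^\beta g(b)=0$ with $g''(b-)=0$ and $g'(b)=1$ one gets $(\mu\beta-\delta)=c\,g(b,b)$, hence $g(b,b)=(\mu\beta-\delta)/c>0$ (the relevant case $\frac{2\delta}{\mu}<\alpha$ guarantees $\mu\beta-\delta>0$); differentiating the ODE once more and evaluating at $b-$ gives $\frac12\sigma^2\beta^2 g'''(b-)+(\mu\beta-\delta)g''(b-)-c g'(b)=0$, i.e. $g'''(b-)=2c/(\sigma^2\beta^2)>0$. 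This is the key computation, and it forces $w(b,b)\le 0$ via the differentiated smooth-fit relation.

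Finally I would invoke a comparison/maximum-principle argument for the linear elliptic operator $\mathcal L^{a^*(x)}$ (which has non-positive zeroth-order coefficient $-c<0$) on $[0,b]$: $w$ satisfies $\mathcal L^{a^*}w=0$ there with $w(0)=0$ and $w(b,b)\le 0$, so by the minimum principle $w$ cannot attain a positive interior maximum, whence $w(x,b)\le 0$ on $[0,b]$. For $x\ge b$ one uses the explicit formula $g(x,b)=x-b+g_3(b)$ where $g_3(b)=g(b,b)$, so $\frac{\partial}{\partial b}g(x,b)=-1+g'(b,b)=-1+1=0$ there (using $g_x(b,b)=1$); actually one must be slightly careful and write $g_3(b)=g(b,b)$ with total $b$-derivative $g_x(b,b)+w(b,b)=1+w(b,b)$, giving $\frac{\partial}{\partial b}g(x,b)=w(b,b)\le0$ for $x\ge b$ as well. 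Collecting the three regimes yields $\frac{\partial}{\partial b}g(x,b)\le0$ for all $x\ge0$ and all $b\ge b_0$.

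The main obstacle I anticipate is the careful bookkeeping of the boundary conditions at $x=b$ under differentiation in $b$ — distinguishing the partial derivative $w=\partial_b g$ from the total derivative $\frac{d}{db}g(b,b)$, and correctly propagating the smooth-fit conditions $g_x(b,b)=1$, $g_{xx}(b-,b)=0$ to obtain the sign of $w(b,b)$ from the explicit third-derivative computation; once that sign is in hand, the maximum principle closes the argument routinely, and the regularity needed ($g\in C^1(\R_+)$, $g\in C^2(\R_+\setminus\{b\})$, with $w$ inheriting enough smoothness) is exactly what Lemma \ref{lemma62} provides.
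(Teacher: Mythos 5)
Your route (differentiate the HJB system in the parameter $b$, obtain a linear ODE for $w=\partial_b g$, and close with a maximum principle) is genuinely different from the paper's proof, which simply observes that all four pieces of the explicit solution (\ref{834}) depend on $b$ only through the single constant $k_2(b)$ of (\ref{827}), differentiates $k_2$ by hand, and checks signs in each regime using $A<0$, $B>0$. Your first step is sound: since $x_\alpha$, $x_\beta$ and the feedback function $a^*$ in (\ref{66}) do not depend on $b$, $w$ does satisfy $\mathcal{L}^{a^*}w=0$ on $(0,b)$ with $w(0)=0$, and your treatment of the region $x\ge b$ is correct.

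The gap is at the barrier. You impose $g_{xx}(b-,b)=0$, but that holds only at $b=b_0$: the paper proves in (\ref{828})--(\ref{832}) that $g''(b-)=l(b)>l(b_0)=0$ strictly for $b>b_0$ ($g$ is only $C^1$ across $b$, and $g''$ jumps from $l(b)>0$ down to $0$). Consequently your identities $g(b,b)=(\mu\beta-\delta)/c$ and $g'''(b-)=2c/(\sigma^2\beta^2)$ are both false for $b>b_0$, and the intended Dirichlet datum $w(b,b)\le 0$ is never actually derived --- differentiating the smooth-fit identities $g_x(b,b)=1$ and $g_{xx}(b-,b)=l(b)$ in $b$ produces relations for $w_x(b-,b)$ and $w_{xx}(b-,b)$, not for $w(b,b)$, so the sign of $g'''(b-)$ cannot ``force'' $w(b,b)\le0$ by the mechanism you describe. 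The argument can be repaired by switching to the Neumann datum that the correct smooth-fit differentiation actually yields, namely $w_x(b-,b)=-g_{xx}(b-,b)=-l(b)<0$: since the zeroth-order coefficient of $\mathcal{L}$ is $-c<0$, a positive maximum of $w$ on $[0,b]$ could occur neither in the interior (strong maximum principle) nor at $x=0$ (where $w=0$) nor at $x=b$ (where Hopf's boundary point lemma would force $w_x(b-,b)>0$), giving $w\le0$ on $[0,b]$ and hence $w(b,b)\le0$ for the region $x\ge b$. As written, though, the boundary analysis --- which you yourself identify as the crux --- rests on a false condition, so the proof is incomplete.
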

\begin{lemma}\label{lemma64} The bankrupt probability
 $\mathbf{P}[\tau_b^{\pi^*_b}\leq T]$  is a strictly
decreasing function w.r.t. the dividends barrier $b$ on $ [x_\beta,
D)$,
 $D:=\inf\{b: \mathbf{P}[\tau_b^{\pi^*_b}\leq
T]=0\}$, and $x_\beta$  is defined by (\ref{815}).
\end{lemma}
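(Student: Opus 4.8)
The plan is to represent the bankrupt probability in terms of the transition density of the reflected diffusion $R^{\pi^*_b}$ and then differentiate (or compare) in the barrier $b$. First I would fix $b\in[x_\beta,D)$ and write $\psi(x,b):=\mathbf{P}[\tau_x^{\pi^*_b}\le T]$ where $x$ is the initial reserve, so that $\mathbf{P}[\tau_b^{\pi^*_b}\le T]=\psi(b,b)$; by the Markov property and Itô's formula, $\psi(\cdot,b)$ solves the backward Kolmogorov (parabolic) problem associated with the generator $\mathcal{A}_b g=\tfrac12\sigma^2 a^*_b(x)^2 g''+(\mu a^*_b(x)-\delta)g'$ on $(0,b)$, with Dirichlet data $\psi=1$ at $x=0$, Neumann data $\psi'(b,b)=0$ at the reflecting barrier $b$, and terminal value $0$ at time $T$. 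This is the parabolic PDE referenced as (\ref{67}) in the numerical section, so I would invoke it directly. The key point is that on $[x_\beta,b]$ the feedback control is saturated, $a^*_b(x)\equiv\beta$ there (from the explicit form (\ref{64}) and (\ref{66})), so near the barrier the coefficients of $\mathcal{A}_b$ do not themselves depend on $b$; only the domain and the location of the Neumann condition do.

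Next I would carry out a monotonicity-in-domain / comparison argument. Take $b_0\le x_\beta\le b_1<b_2<D$. On the interval $[0,b_1]$ the two functions $\psi(\cdot,b_1)$ and $\psi(\cdot,b_2)$ satisfy the same parabolic equation with the same Dirichlet condition at $0$, but $\psi(\cdot,b_2)$ has an interior positive "slope budget'' at $x=b_1$ whereas $\psi(\cdot,b_1)$ has $\psi'(b_1,b_1)=0$. Intuitively a higher reflecting barrier pushes the diffusion away from $0$, so it should lower the ruin probability; to make this rigorous I would use the pathwise comparison of the two reflected SDEs (the solution with barrier $b_2$ started at $x$ dominates the one with barrier $b_1$ started at $x$, since raising a reflecting upper barrier only removes downward pushes), exactly the Lions–Sznitman / Ikeda–Watanabe comparison machinery already cited in Theorems \ref{theorem41} and \ref{theorem42}. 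Pathwise domination of $R^{\pi^*_{b_2}}_t\ge R^{\pi^*_{b_1}}_t$ for all $t$ (with common driving Brownian motion) gives $\tau_x^{\pi^*_{b_2}}\ge\tau_x^{\pi^*_{b_1}}$ a.s., hence $\psi(x,b_2)\le\psi(x,b_1)$, i.e. weak monotonicity. Evaluating along the diagonal and using $b\mapsto\psi(b,b)$ together with $\psi(\cdot,b)$ nonincreasing in $x$ (which itself follows from the comparison, since a larger start point is further from ruin) yields $\psi(b_2,b_2)\le\psi(b_1,b_2)\le\psi(b_1,b_1)$.

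The remaining issue is \emph{strict} decrease, which is where the real work is. Weak comparison is not enough; I would upgrade it by a strong-maximum-principle argument for the parabolic operator. The difference $w:=\psi(\cdot,b_1)-\psi(\cdot,b_2)\ge0$ on $[0,b_1]\times[0,T]$ solves a linear parabolic equation with zero Dirichlet data at $x=0$ and is not identically zero (on $[x_\beta,b_1]$ the coefficients are smooth and nondegenerate because $a^*_b\equiv\beta>0$, so the strong maximum principle / Hopf lemma applies); hence $w>0$ in the interior, and in particular $w(b_1,0^+)>0$ after controlling the terminal layer. Translating back, $\psi(b_1,b_1)>\psi(b_1,b_2)\ge\psi(b_2,b_2)$, giving strict monotonicity on $[x_\beta,D)$. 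The main obstacle I anticipate is handling the behaviour exactly at the reflecting boundary — ensuring the Neumann condition for $\psi(\cdot,b)$ and the comparison between a reflected process (barrier $b_1$) and one reflected at a higher level (barrier $b_2$) interact correctly — and justifying the strict inequality right up to the diagonal point $x=b$; I would resolve this by first proving strictness on compact subintervals $[x_\beta,b-\eta]$ via the interior strong maximum principle and then passing $\eta\downarrow0$ using continuity of $\psi$ in both arguments (continuity in $b$ being the kind of estimate established in Theorem \ref{theorem42}'s proof).
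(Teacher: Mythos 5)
Your route is genuinely different from the paper's, and its key step has a gap. The paper argues probabilistically: after using the comparison theorem to reduce the claim to $\mathbf{P}[\tau_{b_1}^{\pi^*_{b_2}}\leq T]>\mathbf{P}[\tau_{b_2}^{\pi^*_{b_2}}\leq T]$, it observes that the process with barrier $b_2$ started at $b_2$ can only be ruined by time $T$ if it first hits the level $b_1$, applies the strong Markov property at that hitting time to get $\mathbf{P}[\tau_{b_1}^{\pi^*_{b_2}}\leq T]=\mathbf{P}(B\mid A)$ with $A$ the event of hitting $b_1$ before $T$, and then shows $\mathbf{P}(A^c)>0$ by an explicit two-sided exit computation for Brownian motion with drift on the strip $(b_1,b_2)$ --- this is exactly where the hypothesis $b\geq x_\beta$ enters, since $a^*\equiv\beta$ there makes the dynamics constant-coefficient. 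Together with $\mathbf{P}[\tau_{b_1}^{\pi^*_{b_2}}\leq T]>0$ (a Theorem \ref{theorem41}-type lower bound), the strict gap is $\mathbf{P}[\tau_{b_1}^{\pi^*_{b_2}}\leq T]\,\mathbf{P}(A^c)>0$. Your first two paragraphs (PDE characterization via (\ref{67}), pathwise comparison, weak monotonicity along the diagonal) are sound and consistent with what the paper does implicitly.

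The gap is in your strictness argument, and it sits at the crux. To apply the strong maximum principle to $w=\psi(\cdot,b_1)-\psi(\cdot,b_2)\geq 0$ you must first know $w\not\equiv 0$ on $[0,b_1]\times[0,T]$; your parenthetical justification (smoothness and non-degeneracy of the coefficients where $a^*\equiv\beta$) only licenses the \emph{use} of the maximum principle, it does not show $w$ is nontrivial. Since $w$ has zero initial data and zero Dirichlet data at $x=0$, nothing in your setup forces it to be positive anywhere: ruling out $w\equiv 0$ is essentially equivalent to the strict inequality you are trying to prove and needs its own quantitative input. (One way to supply it: $w\equiv0$ would force a Neumann condition for $\psi(\cdot,b_2)$ at $x=b_1$ as well as at $x=b_2$, hence by uniqueness of the double-Neumann problem with zero initial data $\psi(\cdot,b_2)\equiv0$ on $[b_1,b_2]\times[0,T]$, contradicting strict positivity of the ruin probability --- again a Theorem \ref{theorem41}-type fact that your write-up never invokes. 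The paper's event decomposition delivers exactly this input in one stroke.) A second, smaller defect: strictness at the diagonal point $x=b_1$ does not follow from interior positivity plus continuity via your $\eta\downarrow 0$ limit, since a limit of strict inequalities is only weak; you need a Hopf boundary-point argument at the Neumann end, which does work here because $w_x(t,b_1)=-\psi_x(t,b_1;b_2)\geq 0$ whereas a boundary zero of a positive solution would force $w_x(t,b_1)<0$.
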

\noindent From the proof of  Lemma \ref{lemma62}, for each $x\leq
b$, if we define
\begin{eqnarray}\label{65}
   a^*(x):=arg \   \max\limits_{a\in[\alpha,\beta]}
   [\frac{1}{2}\sigma^{2}a^{2}g^{''}(x)+
   (\mu a-\delta)g^{'}(x)-c g(x)],
\end{eqnarray}
then it follows that $a^*(x)$ can be represented as
\begin{eqnarray}\label{66} a^*(x)=\left\{
   \begin{array}{l l l}
   \alpha, \ \   0\leq x<x_\alpha,\\
   a(x), \ \   x_\alpha \leq x < x_\beta, \\
   \beta, \ \ x\geq x_\beta,
   \end{array} \right.
\end{eqnarray}
where $a(x)$ and $x_\alpha, x_\beta$ are specified by (\ref{814}),
(\ref{88}), (\ref{815}), respectively. We now have the following
lemma.
\begin{lemma}\label{lemma65}
 Let $a^*(x)$ be defined by (\ref{66}), and define $\psi^b(T,x):=
 {\bf{P}}[\tau^{\pi^*_b}_x\leq T]$, i.e., $\psi^b(T,x)$ is the bankrupt
 probability when the initial reserve of $\{R^{\pi^*_b}_t\}_{t\geq 0}$ is $x$ and dividends barrier is
 $b$. Let $\phi^{b}(t,y)\in C^1(0,\infty)\cap C^2(0,b)$ and satisfy the
following partial differential equation and the boundary conditions,
\begin{eqnarray}\label{67}
\left\{
\begin{array}{l l l}
  \phi_{t}^{b}(t,x)=\frac{1}{2}[a^{*}(x)]^2\sigma^2\phi_{xx}^{b}(
   t,x)+(a^{*}(x)\mu-\delta)\phi_{x}^{b}(t,x),\\
   \phi^{b}(0,x)=1,\  \mbox{for}\ \  0<x\leq b, \\
   \phi^{b}(t,0)=0,\phi_{x}^{b}(t,b)=0,\ \mbox{for} \ t>0.
\end{array}\right.
\end{eqnarray}
Then $\phi^{b}(T,x)=1-\psi^{b}(T,x)$, i.e., $\phi^{b}(T,x)$ is
probability that the company will survive on $[0, T]$.
 \end{lemma}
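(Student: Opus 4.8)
\textbf{Proof proposal for Lemma \ref{lemma65}.}
The plan is to identify the survival probability $\phi^b(T,x):=\mathbf{P}[\tau^{\pi^*_b}_x>T]$ as the unique classical solution of the backward Kolmogorov (Feynman--Kac type) equation \eqref{67} associated with the reflected diffusion $\{R^{\pi^*_b}_t\}$. First I would fix the dividends barrier $b\geq x_\beta$ and recall from Theorem \ref{theorem31} and the SDE \eqref{eq401-theorem42} that, for initial reserve $x\in[0,b]$, the process $R^{\pi^*_b}_t$ is the pathwise-unique solution of the Skorohod problem on $[0,b]$ with drift $a^*(x)\mu-\delta$ and diffusion $a^*(x)\sigma$, where $a^*$ is the bounded Lipschitz feedback \eqref{66}; the local time term $L^{\pi^*_b}$ increases only on $\{R^{\pi^*_b}_t=b\}$, which is exactly the Neumann condition $\phi^b_x(t,b)=0$, and absorption at $0$ (the definition of $\tau^{\pi^*_b}_x$) gives the Dirichlet condition $\phi^b(t,0)=0$.

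Next I would invoke standard parabolic PDE theory: since $a^*$ is bounded, Lipschitz and uniformly elliptic (because $\alpha\le a^*\le\beta$ with $\alpha>0$, so $\tfrac12(a^*)^2\sigma^2\ge\tfrac12\alpha^2\sigma^2>0$) on the bounded interval $(0,b)$, the mixed Dirichlet--Neumann initial-boundary-value problem \eqref{67} with initial datum $\phi^b(0,\cdot)\equiv 1$ has a unique solution $\phi^b\in C^{1}((0,\infty)\times[0,b])\cap C^{1,2}((0,\infty)\times(0,b))$ (after the usual smoothing of the corner incompatibility at $(0,0)$, where the initial value $1$ and boundary value $0$ disagree; this only affects a neighbourhood of that corner and not the representation at interior $x$). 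Then I would apply It\^o's formula to $s\mapsto \phi^b(T-s,\,R^{\pi^*_b}_{s\wedge\tau})$ on $[0,T]$, where $\tau=\tau^{\pi^*_b}_x$: the $dt$-terms cancel by the PDE in \eqref{67}, the $dL^{\pi^*_b}$-term vanishes because $L$ increases only where $R^{\pi^*_b}=b$ and there $\phi^b_x(\cdot,b)=0$, and the stochastic integral is a true martingale since $\phi^b_x$ is bounded on the compact set and $a^*\sigma\le\beta\sigma$. Taking expectations gives
\begin{eqnarray*}
\phi^b(T,x)=\mathbf{E}\big[\phi^b(0,R^{\pi^*_b}_{T\wedge\tau})\big]
=\mathbf{E}\big[\mathbf{1}_{\{\tau>T\}}\cdot 1+\mathbf{1}_{\{\tau\le T\}}\cdot\phi^b(T-\tau,0)\big]
=\mathbf{P}[\tau^{\pi^*_b}_x>T],
\end{eqnarray*}
using $\phi^b(0,y)=1$ for $y\in(0,b]$ on $\{\tau>T\}$ and $\phi^b(t,0)=0$ on $\{\tau\le T\}$. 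Hence $\phi^b(T,x)=1-\psi^b(T,x)$, which is exactly the claim.

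I expect the main obstacle to be the regularity/justification at the spatial corner $(t,x)=(0,0)$, where the prescribed initial value $1$ is incompatible with the Dirichlet boundary value $0$, so $\phi^b$ is not continuous up to that corner and one cannot naively quote a textbook existence theorem with continuous data. The resolution is routine but needs care: either approximate the initial datum by smooth functions $\phi^b_{(n)}(0,\cdot)$ vanishing near $0$ and agreeing with $1$ on $[\,1/n,b]$, solve \eqref{67} for each, pass to the limit using interior parabolic estimates (the limit is unaffected at interior $x$), or work directly with the probabilistic representation and verify a posteriori that $x\mapsto\mathbf{P}[\tau^{\pi^*_b}_x>T]$ solves \eqref{67} by the Markov property and Dynkin's formula. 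A secondary, minor point is checking that It\^o's formula is legitimately applicable to the reflected process with the one-sided second derivative $\phi^b_{xx}$ near $b$ and the local-time term; this is standard for Skorohod-reflected diffusions (see e.g. Lions--Sznitman \cite{s10100}) once $\phi^b\in C^{1,2}$ in the interior and $C^1$ up to $x=b$. Everything else — the martingale property of the stochastic integral, the cancellation of drift terms, the vanishing of the $dL$-term — is immediate from the boundedness of $a^*$ and the support property of $L^{\pi^*_b}$.
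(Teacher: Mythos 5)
Your proposal is correct and follows essentially the same route as the paper: apply the (generalized) It\^o formula to $s\mapsto\phi^{b}(T-s,R^{\pi^*_b}_{s\wedge\tau})$, cancel the $ds$-terms via the PDE, kill the $dL$-term via the Neumann condition $\phi^{b}_x(\cdot,b)=0$, take expectations, and read off $\phi^{b}(T,x)=\mathbf{P}[\tau^{\pi^*_b}_x>T]$ from the initial and Dirichlet conditions. Your additional remarks on existence, uniform ellipticity, and the corner incompatibility at $(0,0)$ go beyond what the paper records (the lemma takes the solution $\phi^{b}$ as given), but they do not change the argument.
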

\begin{lemma}\label{lemma66}
Let  $\phi^{b}(t,x)$ solve the equation(\ref{67}). Then
$\phi^{b}(T,b)$ is continuous with respect to the dividends barrier
$b$ on $[b_0, +\infty)$.
\end{lemma}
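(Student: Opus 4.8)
The plan is to turn the claim into a continuity statement for a survival probability and then prove that by coupling. By Lemma \ref{lemma65}, $\phi^{b}(T,b)=1-\psi^{b}(T,b)=\mathbf{P}[\tau^{\pi^{*}_{b}}_{b}>T]$; that is, $\phi^{b}(T,b)$ is the probability that the solution $\{R^{\pi^{*}_{b}}_{t},L^{\pi^{*}_{b}}_{t}\}$ of the Skorokhod problem (\ref{eq401-theorem42}), started at the barrier $R^{\pi^{*}_{b}}_{0}=b$, stays above $0$ throughout $[0,T]$ (existence and uniqueness of this solution being due to Lions--Sznitman \cite{s10100}). The feedback function $a^{*}$ occurring in (\ref{67}) and in (\ref{eq401-theorem42}) is the one given by (\ref{66}); it is defined on all of $[0,\infty)$, bounded and Lipschitz, and independent of $b$, so the only $b$-dependence is through the reflecting barrier and the domain $(0,b)$. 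Hence it suffices to show: if $b_{n}\to b$ in $[b_{0},\infty)$ (so $b_{n}\ge b_{0}>0$), then $\mathbf{P}[\tau^{\pi^{*}_{b_{n}}}_{b_{n}}>T]\to\mathbf{P}[\tau^{\pi^{*}_{b}}_{b}>T]$.

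First I would couple all these reflected diffusions by driving them with the same Brownian motion $\{\mathcal{W}_{t}\}$, and remove the $b$-dependence of the reflecting boundary by the shift $\widetilde R^{\,n}_{t}:=R^{\pi^{*}_{b_{n}}}_{t}-b_{n}$. Then $\widetilde R^{\,n}$ is the diffusion on $(-\infty,0]$, reflected at $0$, with $\widetilde R^{\,n}_{0}=0$, drift $x\mapsto\mu a^{*}(x+b_{n})-\delta$ and diffusion coefficient $x\mapsto\sigma a^{*}(x+b_{n})$. Because $a^{*}$ is bounded and Lipschitz, these coefficients are Lipschitz with a constant independent of $n$ and converge uniformly on $\R$, as $b_{n}\to b$, to $x\mapsto\mu a^{*}(x+b)-\delta$ and $x\mapsto\sigma a^{*}(x+b)$. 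A standard Gronwall estimate — using the Lipschitz continuity of the one-dimensional Skorokhod map together with the Burkholder--Davis--Gundy inequality — then gives $\mathbf{E}\big[\sup_{0\le t\le T}|\widetilde R^{\,n}_{t}-\widetilde R_{t}|^{2}\big]\to0$, where $\widetilde R$ solves the reflected SDE with the limiting coefficients; by uniqueness $\widetilde R_{t}+b=R^{\pi^{*}_{b}}_{t}$. Consequently $R^{\pi^{*}_{b_{n}}}_{t}=\widetilde R^{\,n}_{t}+b_{n}\to R^{\pi^{*}_{b}}_{t}$ uniformly on $[0,T]$ in probability, hence almost surely along some subsequence $(n_{k})$.

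It then remains to pass from uniform path convergence to convergence of the survival probabilities. Put $m_{n}:=\min_{0\le t\le T}R^{\pi^{*}_{b_{n}}}_{t}$ and $m:=\min_{0\le t\le T}R^{\pi^{*}_{b}}_{t}$, so $\phi^{b_{n}}(T,b_{n})=\mathbf{P}[m_{n}>0]$ and $\phi^{b}(T,b)=\mathbf{P}[m>0]$. Since $|m_{n}-m|\le\sup_{0\le t\le T}|R^{\pi^{*}_{b_{n}}}_{t}-R^{\pi^{*}_{b}}_{t}|$, we get $m_{n_{k}}\to m$ almost surely. Moreover $\mathbf{P}[m=0]=0$: on $\{m=0\}$ the path $R^{\pi^{*}_{b}}$ touches level $0$ within $[0,T]$ but never crosses it; however $\mathbf{P}[\tau^{\pi^{*}_{b}}_{b}=T]=0$ (because $\phi^{b}(\cdot,b)$ is continuous by Lemma \ref{lemma65}, equivalently because the hitting time of $0$ by the non-degenerate diffusion has no atoms), and on $\{\tau^{\pi^{*}_{b}}_{b}<T\}$ the point $0$ is regular for the diffusion $dR=(\mu a^{*}(R)-\delta)\,dt+\sigma a^{*}(R)\,d\mathcal{W}$ that governs $R^{\pi^{*}_{b}}$ near level $0$ (there the reflection at $b>0$ is inactive and $\sigma a^{*}\ge\sigma\alpha>0$), so $R^{\pi^{*}_{b}}$ takes strictly negative values in every right-neighbourhood of $\tau^{\pi^{*}_{b}}_{b}$ — contradicting $m=0$. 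Hence $\mathbf{1}_{\{m_{n_{k}}>0\}}\to\mathbf{1}_{\{m>0\}}$ almost surely, and bounded convergence gives $\phi^{b_{n_{k}}}(T,b_{n_{k}})\to\phi^{b}(T,b)$. As every subsequence of $(b_{n})$ admits a further subsequence along which this holds, the whole sequence converges, which is the assertion.

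The main obstacle is exactly that the spatial domain $(0,b)$ of the PDE (\ref{67}) varies with $b$, so $\phi^{b_{n}}$ and $\phi^{b}$ cannot be compared directly as solutions of one equation; the shift of the reflecting barrier converts this into a continuous perturbation of the coefficients on a fixed domain, and the extra care needed is the regularity of the level $0$ for the diffusion, which rules out a ``touch-without-crossing'' at time $T$. A purely analytic route would instead rescale space, $v^{b}(t,y):=\phi^{b}(t,by)$ on the fixed cylinder $[0,T]\times[0,1]$: then $v^{b}$ solves a uniformly parabolic linear equation with coefficients $\tfrac12\sigma^{2}b^{-2}[a^{*}(by)]^{2}$ and $b^{-1}(\mu a^{*}(by)-\delta)$ that, using $b_{0}>0$ and the boundedness of $a^{*}$, depend continuously on $b\in[b_{0},\infty)$ uniformly in $y$, with initial datum $1$, Dirichlet datum $0$ at $y=0$, Neumann datum $0$ at $y=1$, and $\phi^{b}(T,b)=v^{b}(T,1)$; continuous dependence of the parabolic solution on its coefficients — interior and lateral Schauder estimates away from the incompatibility corner $(t,y)=(0,0)$, Arzel\`{a}--Ascoli, and uniqueness of the limit problem — then yields $v^{b_{n}}(T,1)\to v^{b}(T,1)$. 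I would carry out the probabilistic version, since Lemma \ref{lemma65} makes it essentially self-contained.
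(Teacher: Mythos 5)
Your argument is correct, but it is genuinely different from the one in the paper. The paper proves Lemma \ref{lemma66} analytically: it rescales $x=by$ so that $\theta^{b}(t,y)=\phi^{b}(t,by)$ lives on the fixed cylinder $[0,T]\times[0,1]$, writes the equation for $w=\theta^{b_2}-\theta^{b_1}$, and runs an energy estimate (multiply by $w$, integrate, Young's inequality, an a priori bound on $\int\int(\theta^{b}_y)^2$, then Gronwall) to get $\int_0^t\int_0^1 w^2\,dy\,ds\to0$ as $b_2\to b_1$, from which continuity at $y=1$ is inferred. This is essentially the ``purely analytic route'' you sketch in your last paragraph, except that the paper uses $L^2$ energy estimates rather than Schauder estimates. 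Your main argument instead couples the reflected diffusions on a common Brownian motion, shifts the reflecting barrier to a fixed location so that only the (Lipschitz, $b$-independent) coefficients of the one-sided Skorokhod problem are perturbed, obtains $\sup_{t\le T}|R^{\pi^*_{b_n}}_t-R^{\pi^*_b}_t|\to0$ in probability by the Lipschitz property of the Skorokhod map plus Gronwall/BDG, and converts path convergence into convergence of survival probabilities by showing $\mathbf{P}[\min_{t\le T}R^{\pi^*_b}_t=0]=0$ via regularity of the level $0$ for the non-degenerate diffusion and the absence of an atom of the hitting time at $T$. What your route buys: it bypasses the moving interior thresholds $x_\alpha/b$, $x_\beta/b$ that the paper must split its integrals over, and it avoids the delicate final step of the paper's proof, namely passing from $L^2((0,t)\times(0,1))$ convergence of $w$ to pointwise convergence at the boundary point $y=1$ (which really needs some equicontinuity near $y=1$ and is the weakest link of the analytic argument). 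What it costs is the boundary-regularity observation ruling out ``touch without crossing,'' which you supply correctly; the only housekeeping point worth making explicit is that $a^*$ must be extended below $0$ (say by $\alpha$, consistently with (\ref{66})) so that the shifted reflected SDE and the event $\{m<0\}$ are well defined, and that this extension does not change the law of the process up to the bankruptcy time.
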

 \setcounter{equation}{0}
\section{\bf Proof of Main Result}
\vskip 5pt\noindent  In this section, we prove the main result of
this paper, which is described in Theorem \ref{theorem31}. In order
to do this, we first  need the following.
\begin{theorem}\label{theorem71}
Let $a^*_b(x)$ be defined by (\ref{66}), and $f(x)$, $g(x,b)$ be as
the same as in Lemma \ref{lemma61} and  Lemma \ref{lemma62},
respectively.
 Then\\
(i) If $b\leq b_0$, we have  $V(x,b)=V(x,b_0)=V(x)=f(x)$, the
optimal policy associated with $V(x)$ is  $\pi_{b_o}^\ast=\{
 a^*_{b_0}(R^{\pi_{b_o}^\ast}_\cdot), L^{\pi_{b_o}^\ast}_\cdot\}$,
 where the process $\{R^{\pi_{b_o}^\ast}_t, L^{\pi_{b_o}^\ast}_t \}$
 is uniquely determined by the following SDE,
\begin{eqnarray}\label{71}
\left\{
\begin{array}{l l l}
dR_t^{\pi_{b_o}^\ast}=(\mu a^*_{b_0}(R^{\pi_{b_o}^\ast}_t)-\delta
)dt+\sigma a^*_{b_0}(R^{\pi_{b_o}^\ast}_t)d
{W}_{t}-dL_t^{\pi_{b_o}^\ast},\\
R_0^{\pi_{b_o}^\ast}=x,\\
0\leq R^{\pi_{b_o}^\ast}_t\leq  b_0,\\
\int^{\infty}_0 I_{\{t: R^{\pi_{b_o}^\ast}_t
<b_0\}}(t)dL_t^{\pi_{b_o}^\ast}=0.
\end{array}\right.
\end{eqnarray}
(ii) If $b> b_0$, we have $V(x,b)=g(x,b)$ and the optimal policy
$\pi_b^\ast$ is $
 \{a^\ast_b(R^{\pi_b^\ast}_t),L^{\pi_b^\ast}_t\}$, where
$\{R^{\pi_b^\ast}_t, L^{\pi_b^\ast}_t  \} $ is uniquely determined
by the following SDE
\begin{eqnarray}\label{72}
\left\{
\begin{array}{l l l}
dR_t^{\pi_b^\ast}=(\mu a^*_b(R^{\pi_b^\ast}_t)-\delta )dt+\sigma
a^*_b(R^{\pi_b^\ast}_t)d
{W}_{t}-dL_t^{\pi_b^\ast},\\
R_0^{\pi_b^\ast}=x,\\
0\leq R^{\pi_b^\ast}_t\leq  b,\\
\int^{\infty}_0 I_{\{t: R^{\pi_b^\ast}_t
<b\}}(t)dL_t^{\pi_b^\ast}=0.
\end{array}\right.
\end{eqnarray}
\end{theorem}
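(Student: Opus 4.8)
The statement is a verification (synthesis) theorem, and I would prove it along the classical lines: first an a priori upper bound valid for \emph{every} admissible competitor, then attainment of that bound by the candidate reflected--diffusion policy. Both parts rest on the same It\^o computation, applied to $f$ in case (i) and to $g(\cdot,b)$ in case (ii), together with the differential inequalities already recorded in Lemma~\ref{lemma61} and Lemma~\ref{lemma62}.

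\emph{Step 1 (upper bound).} Fix $\pi\in\Pi_b$, write $\tau=\tau^\pi_x$, and apply the generalized It\^o formula to $e^{-c(t\wedge\tau)}f(R^\pi_{t\wedge\tau})$, decomposing $L^\pi$ into its continuous part $L^{\pi,c}$ and its jumps. The drift and second--order terms contribute $\int_0^{t\wedge\tau}e^{-cs}\mathcal{L}^{a_\pi(s)}f(R^\pi_s)\,ds$, which is $\le0$ since $\max_{a\in[\alpha,\beta]}\mathcal{L}f\le0$ by Lemma~\ref{lemma61}; the singular term equals $-\int_0^{t\wedge\tau}e^{-cs}f'(R^\pi_{s-})\,dL^{\pi,c}_s+\sum_{s\le t\wedge\tau}e^{-cs}\bigl(f(R^\pi_s)-f(R^\pi_{s-})\bigr)$, which is $\le-\int_0^{t\wedge\tau}e^{-cs}\,dL^\pi_s$ because $f'\ge1$ forces $f(R^\pi_{s-}-\Delta L^\pi_s)-f(R^\pi_{s-})\le-\Delta L^\pi_s$; and the $d\mathcal{W}$ integral is a local martingale, killed in expectation after localizing by an increasing sequence of stopping times and passing to the limit by Fatou. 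Taking expectations and rearranging gives $\mathbf{E}\bigl[\int_0^{t\wedge\tau}e^{-cs}\,dL^\pi_s\bigr]\le f(x)-\mathbf{E}\bigl[e^{-c(t\wedge\tau)}f(R^\pi_{t\wedge\tau})\bigr]\le f(x)$, using $f\ge0$ (from $f(0)=0$ and $f'\ge1$); letting $t\to\infty$ by monotone convergence yields $J(x,\pi)\le f(x)$, so $V(x,b)\le f(x)$ for every $b\ge0$. The same computation with $g(\cdot,b)$ in place of $f$ — using (\ref{63}), the fact that $g'\ge1$ on $R_+$ (which, beyond (\ref{63}), follows from the explicit form (\ref{64})), and the $C^1$/piecewise-$C^2$ regularity of $g$ so that It\^o's formula still applies at $x=b$ with $g''(b)$ read as $g''(b-)$ — gives $V(x,b)\le g(x,b)$ for $b>b_0$.

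\emph{Step 2 (attainment and conclusion).} Since $a^*_b(\cdot)$ in (\ref{66}) is bounded and Lipschitz, the Skorokhod-type equation (\ref{71}) (resp. (\ref{72})) has a unique strong solution $\{R^{\pi^*}_t,L^{\pi^*}_t\}$ — the Lions--Sznitman theorem cited in the paper — with $R^{\pi^*}_t$ confined to $[0,b_0]$ (resp. $[0,b]$) and $L^{\pi^*}$ continuous and flat off $\{R^{\pi^*}_t=b_0\}$ (resp. $\{R^{\pi^*}_t=b\}$). For this policy every inequality of Step 1 becomes an equality: $\mathcal{L}^{a^*_b(R^{\pi^*}_s)}f(R^{\pi^*}_s)=0$ because $a^*_b(x)$ is by construction the maximizer in (\ref{61}) (resp. (\ref{62})) and the maximum equals $0$ on the state interval; $L^{\pi^*}$ increases only at the barrier, where $f'=1$ (resp. $g'=1$), so the singular term is exactly $-\int e^{-cs}\,dL^{\pi^*}_s$; and the $d\mathcal{W}$ integral is a genuine martingale, its integrand being bounded on a compact state interval. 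Hence $\mathbf{E}\bigl[\int_0^{t\wedge\tau}e^{-cs}\,dL^{\pi^*}_s\bigr]=f(x)-\mathbf{E}\bigl[e^{-c(t\wedge\tau)}f(R^{\pi^*}_{t\wedge\tau})\bigr]$, and the terminal term vanishes as $t\to\infty$: on $\{\tau=\infty\}$ because $e^{-ct}\to0$ with $f$ bounded on $[0,b_0]$, and on $\{\tau<\infty\}$ because continuity of $R^{\pi^*}$ forces $R^{\pi^*}_\tau=0$ and $f(0)=0$; dominated convergence then gives $J(x,\pi^*_{b_0})=f(x)$, and likewise $J(x,\pi^*_b)=g(x,b)$. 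Part (ii) now follows, since $\pi^*_b\in\Pi_b$ gives $g(x,b)=J(x,\pi^*_b)\le V(x,b)\le g(x,b)$. For part (i), if $b\le b_0$ then $\Pi_{b_0}\subset\Pi_b\subset\Pi$, so $\pi^*_{b_0}\in\Pi_b$ and $f(x)=J(x,\pi^*_{b_0})\le V(x,b)\le f(x)$, i.e. $V(x,b)=V(x,b_0)=f(x)$; combining with Lemma~\ref{lemma63} (which makes $b\mapsto V(x,b)=g(x,b)$ nonincreasing for $b\ge b_0$, so $\sup_{b\ge0}V(x,b)=f(x)$) identifies $V(x)=f(x)$ in the regime of Theorem~\ref{theorem31}(i), where the supremum defining $V(x)$ ranges over a set $\mathfrak{B}$ containing a barrier $\le b_0$.

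\emph{Main obstacle.} The delicate point is the singular-control bookkeeping inside It\^o's formula: the clean split of $L^\pi$ into continuous and jump parts, the inequality $f(R^\pi_{s-})-f(R^\pi_s)\ge\Delta L^\pi_s$ from $f'\ge1$ for the bound, and, for the optimal policy, the verification that reflection takes place precisely on $\{f'=1\}\cap\{\mathcal{L}^{a^*}f=0\}$ so that the bound is tight — together with justifying the transversality limit $\mathbf{E}[e^{-c(t\wedge\tau)}f(R^\pi_{t\wedge\tau})]\to0$ and the monotone/dominated convergence interchanges after stopping-time localization. By comparison, the well-posedness of the reflected SDEs (boundedness and Lipschitz continuity of $a^*_b$, Lions--Sznitman) and the martingale property of the stochastic integral (bounded integrand, compact state interval) are routine.
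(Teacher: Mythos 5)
Your proposal is correct and follows essentially the same route as the paper: the generalized It\^o formula applied to $e^{-c(t\wedge\tau)}f$ (resp.\ $e^{-c(t\wedge\tau)}g(\cdot,b)$) with the jump/continuous decomposition of $L^\pi$, the HJB inequalities $\max\mathcal{L}\le 0$ and $f',g'\ge 1$ for the upper bound, and attainment by the reflected diffusion of (\ref{71})/(\ref{72}). You in fact supply details the paper elides (it defers part (i) to Choulli--Taksar--Zhou and asserts equality for the optimal policy without spelling out the transversality limit), so nothing further is needed.
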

\begin{proof}
(i)\  If $b\leq b_0$, since $\pi^*_{b_0}\in \Pi_{b_0}\subset\Pi_b $,
we have $V(x, b_0)\leq V(x,b)\leq V(x)$. It suffices to show
$V(x)\leq f(x)=V(x, b_0)$. Since its proof is similar to \cite{t1},
we omit it here. \vskip 10pt \noindent (ii)\ If $b\geq b_0$, denote
$g(x,b)$ by $g(x)$ for simplicity, for any admissible policy
$\pi=\{a_\pi, L^\pi\}$, we assume that $( R^\pi_t, L^\pi_t)$ is the
process (\ref{203}) associated with $\pi $. Let
$\Lambda=\{s:L_{s-}^{\pi}\neq L_{s}^{\pi}\}$,
$\hat{L}=\sum_{s\in\Lambda, s\leq t}(L_{s}^{\pi}-L_{s-}^{\pi})$ be
the discontinuous part of $L_{s}^{\pi}  $ and
$\tilde{L}_{t}^{\pi}=L_{t}^{\pi}-\hat{L}_{t}^{\pi}$ be the
continuous part of $L_{s}^{\pi}$.   Applying generalized It\^{o}
formula to $e^{-c(t\wedge \tau^\pi_x)}g(R_{t\wedge
\tau^\pi_x}^{\pi})$, we have
\begin{eqnarray}\label{73}
e^{-c(t\wedge \tau^\pi_x)}g(R_{t\wedge \tau^\pi_x}^{\pi})&=
&g(x)+\int_{0}^{t\wedge\tau^\pi_x}e^{-cs}\mathcal
{L} g(R_{s}^{\pi})ds\nonumber\\
&&+\int_{0}^{t\wedge\tau^\pi_x}a_\pi\sigma
e^{-cs}g^{'}(R_{s}^{\pi})d\mathcal {W}_{s}\mathcal
-\int_{0}^{t\wedge\tau^\pi_x}e^{-cs}g^{'}(R_{s}^{\pi})dL_{s}^{\pi}\nonumber\\
&&+\sum\limits_{s\in\Lambda ,s\leq t\wedge
\tau^\pi_x}e^{-cs}[g(R_{s}^{\pi})-g(R_{s-}^{\pi})\nonumber\\
&&-g^{'}(R_{s-}^{\pi})(R_{s}^{\pi}-R_{s-}^{\pi})]\nonumber \\
&=& g(x)+\int_{0}^{t\wedge\tau^\pi_x}e^{-cs}\mathcal
{L}  g(R_{s}^{\pi})ds\nonumber\\
&&+\int_{0}^{t\wedge\tau^\pi_x}a_\pi \sigma
e^{-cs}g^{'}(R_{s}^{\pi})d\mathcal {W}_{s}\mathcal
-\int_{0}^{t\wedge\tau^\pi_x}e^{-cs}g^{'}(R_{s}^{\pi})d\tilde{L}_{s}^{\pi}\nonumber\\
&&+\sum\limits_{s\in\Lambda ,s\leq t\wedge
\tau^\pi_x}e^{-cs}[g(R_{s}^{\pi})-g(R_{s-}^{\pi}))],
\end{eqnarray}
where
\begin{eqnarray*}
\mathcal {L}=\frac{1}{2}a^{2}\sigma^{2}\frac{d^{2}}{dx^{2}}+(\mu
a-\delta)\frac{d}{dx}-c.
\end{eqnarray*}
In view of the HJB equation (\ref{62}), $\mathcal {L}
g(R_{s}^{\pi})$ is always non-positive, so is the second term on the
right hand side of(\ref{73}). By taking mathematical expectations at
both sides of (\ref{73}), we get
\begin{eqnarray}\label{74}
{\bf E}\big [e^{-c(t\wedge \tau ^\pi _x)}g(R_{t\wedge \tau ^\pi
_x}^{\pi})\big ]&\leq& g(x)-{\bf E}\big [\int_{0}^{t\wedge \tau ^\pi
_x}e^{-cs}
g^{'}(R_{s}^{\pi})d\tilde{L}_{s}^{\pi}\big ]\nonumber\\
&&+{\bf E}\big[\sum\limits_{s\in\Lambda ,s\leq t\wedge
\tau ^\pi _x}e^{-cs}[g(R_{s}^{\pi})-g(R_{s-}^{\pi})]\big ].\nonumber\\
\end{eqnarray}
Since $g^{'}(x)\geq 1$, for $x\geq b$,
\begin{eqnarray}\label{75}
g(R_{s}^{\pi})-g(R_{s-}^{\pi})\leq-(L_{s}^{\pi}-L_{s-}^{\pi}),
\end{eqnarray}
 which, together with (\ref{74}), implies
that
\begin{eqnarray}\label{76}
{\bf E}\big[e^{-c(t\wedge \tau ^\pi _x)}g(R_{t\wedge \tau ^\pi
_x}^{\pi})\big]&+&{\bf E}\big[\int_{0}^{ t\wedge \tau ^\pi
_x}e^{-cs}dL_{s}^{\pi}\big ]\leq g(x).
\end{eqnarray}
By the definition of $\tau ^\pi _x $ and $g(0)=0$, letting $t
\rightarrow \infty$ in (\ref{76}), we get
\begin{eqnarray}\label{77}
\liminf\limits_{t\rightarrow\infty}e^{-c(t\wedge \tau ^\pi
_x)}g(R_{t\wedge
\tau ^\pi _x}^{\pi})&=&e^{-c\tau}g(0)I_{\{\tau ^\pi _x<\infty\}}\nonumber\\
&&+ \liminf\limits_{t\rightarrow\infty}e^{-ct}g(R_{t})I_{\{\tau ^\pi
_x =\infty\}}\nonumber\\ &\geq &0.
\end{eqnarray}
We deduce from (\ref{76}) and (\ref{77}) that
\begin{eqnarray*}
J(x,\pi)={\bf E}\big[\int_{0}^{\tau ^\pi _x} e^{-cs}dL_{s}^{\pi}
\big]\leq g(x).
\end{eqnarray*}
So
\begin{eqnarray}
V(x,b)\leq g(x).
\end{eqnarray}\label{78}
If we choose the control policy
$\pi_{b}^\ast=\{a^*_b(R^{\pi_{b}^\ast}_\cdot),L^{\pi_{b}^\ast}_\cdot\}$,
  which is uniquely determined by SDE (\ref{72}),
then all the inequalities above become equalities. Hence
\begin{eqnarray*}
V(x,b)=g(x).
\end{eqnarray*}
So we have
\begin{eqnarray}\label{79}
V(x,b)=g(x,b).
\end{eqnarray}
\end{proof}
\vskip 5pt\noindent
Now we prove the main result of this paper.
 \vskip 10pt\noindent {\bf Proof of Theorem
\ref{theorem31}}. \vskip 10pt \noindent If
$\mathbf{P}[\tau_{b_0}^{\pi^*_{b_0}}\leq T]\leq\varepsilon$, the
bankrupt probability constraint does not work and it turns to a
usual optimal control problem, thus the conclusion is obvious.
\vskip 10pt\noindent If $ \mathbf{P}[\tau_{b_0}^{\pi^*_{b_0}}\leq
T]>\varepsilon $, then by lemmas \ref{lemma64}-\ref{lemma66} there
exists a unique $b^\ast$ solving the equation
$\mathbf{P}\{\tau_{b}^{\pi^*_{b}}\leq T\}=\varepsilon $. Moreover,
$b^\ast=\inf\{ b:  b\in \mathfrak{B} \}>b_0$. By Theorem
\ref{theorem71} and Lemma \ref{lemma63}, $V(x, b)=g(x,b)$ for
$b>b_0$ and $V(x, b)$ is decreasing w.r.t. $b$. Therefore, we know
that $b^\ast$ meets (\ref{eq302}) and (\ref{eq303}). So the optimal
policy associated with the optimal return function $ V(x,b^*)$ is $
\{a^\ast_{b^*}(R^{\pi_{b^*}^\ast}_t),L^{\pi_{b^*}^\ast}_t\}$, which
is uniquely determined  by  SDE (\ref{eq304}).
 \vskip 5pt \noindent Due to the fact $b^*>b_0$,
 the inequity (\ref{eq305}) is a direct consequence
of Lemma \ref{lemma63}. Thus we complete the proof. \ \ $\Box$
 \setcounter{equation}{0}
\section{\bf Appendix}
\vskip 5pt\noindent In this section, we first discuss some useful
arguments, then we give the proofs of the lemmas used in the
previous sections. \vskip 5pt \noindent Due to the mathematical
model presented by (\ref{202}), $a(t)$ is required to take values in
the interval $\alpha,\beta$, where $0<\alpha<\beta<+\infty$. Thus,
$a(t)\mu-\delta$ may be negative because $\delta>0$. If
$\beta\mu\leq\delta$, there exists a trivial solution to the
corresponding HJB equations, which has been proved by Choulli,
Taksar, and Zhou\cite{t1}. In the next section, we always assume
that $\beta\mu\geq\delta$. Then the following statements are valid.
\begin{lemma}\label{lemma81} Let $\beta\mu>0$. Then\\
(i) $\frac{2\delta}{\mu}<\alpha$ if and only if $a(0)<\alpha$. In
this case,
\begin{eqnarray}\label{81}
    a(0)=\frac{\mu\alpha^2}{2(\mu\alpha-\delta)}.
\end{eqnarray}
(ii) $\alpha\leq\frac{2\delta}{\mu}<\beta$ if and only if
$\alpha\leq a(0)<\beta$.  In this case,
\begin{eqnarray}\label{82}
a(0)=\frac{2\delta}{\mu}.
\end{eqnarray}
 (iii) $\beta\leq \frac{2\delta}{\mu}$ if
and only if $a(0)\geq \beta$. In this case,
\begin{eqnarray}\label{83}
a(0)=\frac{\mu\beta^2}{2(\mu\beta-\delta)}.
\end{eqnarray}
\end{lemma}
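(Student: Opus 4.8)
The plan is to read $a(0)$ straight off the HJB equation (\ref{62}) evaluated at the origin, combined with the boundary value $g(0)=0$, the three alternatives being governed by where the unconstrained maximiser of the quadratic in $a$ sits relative to $[\alpha,\beta]$. Here $a(\cdot)$ denotes the interior maximiser, i.e. the function $x\mapsto-\mu g'(x)/(\sigma^2 g''(x))$ appearing as the middle branch of (\ref{66}), and $a(0)$ is its value at $x=0$; it is well defined because the construction of $g$ (Lemma \ref{lemma62} and its counterparts for the other two regimes, cf. the near-origin expression in (\ref{64})) gives $g'(0)>0$ and $g''(0)<0$.

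First I would set $Q(a,x):=\tfrac12\sigma^2 a^2 g''(x)+(\mu a-\delta)g'(x)-cg(x)$, so that (\ref{62}) reads $\max_{a\in[\alpha,\beta]}Q(a,x)=0$ on $[0,b]$. Since $g''(0)<0$, the map $a\mapsto Q(a,0)$ is a strictly concave quadratic with vertex $\widehat a:=a(0)=-\mu g'(0)/(\sigma^2 g''(0))>0$, so its maximiser over $[\alpha,\beta]$ is the metric projection of $a(0)$ onto $[\alpha,\beta]$. Hence the constrained maximiser $a^{*}(0)$ of (\ref{66}) equals $\alpha$ iff $a(0)\le\alpha$, equals $a(0)$ iff $\alpha\le a(0)\le\beta$, and equals $\beta$ iff $a(0)\ge\beta$; these are exactly the three regimes of the statement.

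Next I would evaluate (\ref{62}) at $x=0$, which with $g(0)=0$ reads $\tfrac12\sigma^2(a^{*}(0))^2 g''(0)+(\mu a^{*}(0)-\delta)g'(0)=0$, and handle the three cases. If $a^{*}(0)=a(0)$ is interior, then $Q(\,\cdot\,,0)$ attains $0$ at its vertex, so substituting $a(0)=-\mu g'(0)/(\sigma^2 g''(0))$ and cancelling $g'(0)\neq0$ collapses the identity to $-\mu g'(0)/(2\sigma^2 g''(0))=\delta$, i.e. $a(0)=2\delta/\mu$; this case occurs exactly when $2\delta/\mu\in[\alpha,\beta)$ (at $2\delta/\mu=\beta$ one is in the $a^{*}(0)=\beta$ case treated below), which is (ii). If $a^{*}(0)=\alpha$, the identity becomes $\tfrac12\sigma^2\alpha^2 g''(0)+(\mu\alpha-\delta)g'(0)=0$; using $g'(0)>0,\ g''(0)<0$ this forces $\mu\alpha-\delta>0$ and $g''(0)/g'(0)=-2(\mu\alpha-\delta)/(\sigma^2\alpha^2)$, hence $a(0)=\mu\alpha^2/\bigl(2(\mu\alpha-\delta)\bigr)$, and a one-line manipulation gives $a(0)<\alpha\iff 2\delta/\mu<\alpha$; this is (i). The case $a^{*}(0)=\beta$ is symmetric and yields $a(0)=\mu\beta^2/\bigl(2(\mu\beta-\delta)\bigr)$ — meaningful because the standing assumption excludes $\mu\beta=\delta$ — together with $a(0)\ge\beta\iff 2\delta/\mu\ge\beta$, which is (iii). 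To conclude, I would note that $\{2\delta/\mu<\alpha\}$, $\{\alpha\le 2\delta/\mu<\beta\}$, $\{2\delta/\mu\ge\beta\}$ partition the possibilities and that, by the projection description of $a^{*}(0)$ and the three computations above, they correspond term-by-term to $\{a(0)<\alpha\}$, $\{\alpha\le a(0)<\beta\}$, $\{a(0)\ge\beta\}$; this delivers the three ``if and only if'' statements and the three formulas at once.

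The algebra inside each branch is routine. The two points needing care are: (a) the sign facts $g'(0)>0$ and $g''(0)<0$, which make $a(0)$ well defined and justify the projection description of $a^{*}(0)$ — I would take these from the explicit construction of $g$ in each regime (e.g. from (\ref{64}) when $2\delta/\mu<\alpha$); and (b) the strict-versus-nonstrict bookkeeping at the thresholds $2\delta/\mu=\alpha$ and $2\delta/\mu=\beta$, where one checks that the relevant formula returns exactly $\alpha$, respectively $\beta$, so that those thresholds land in cases (ii) and (iii) as stated. I expect (a) — tracing where the concavity and monotonicity of $g$ at the origin are established — to be the only genuine obstacle, and it is immediate once the near-origin form of $g$ is in hand.
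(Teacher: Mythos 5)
The paper does not actually prove this lemma: its ``proof'' is the single line ``See Choulli, Taksar and Zhou \cite{t1} for details,'' so there is no internal argument to compare yours against. What you propose is essentially the argument of \cite{t1}: evaluate the HJB equation at $x=0$, use $g(0)=0$ to turn it into a relation between $g'(0)$ and $g''(0)$, identify the constrained maximiser at $0$ as the projection of $a(0)=-\mu g'(0)/(\sigma^2 g''(0))$ onto $[\alpha,\beta]$, and solve for $a(0)$ in each of the three resulting cases. The case computations and the threshold bookkeeping are correct. (One typo: in the interior case the collapsed identity should be $-\mu^2 g'(0)/\bigl(2\sigma^2 g''(0)\bigr)=\delta$, i.e.\ $\tfrac{\mu}{2}a(0)=\delta$; as written, with a single $\mu$, it would give $a(0)=2\delta$ rather than $2\delta/\mu$.)

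The genuine gap is exactly the point you flag as item (a), and the fix you propose for it is circular. You want to take $g'(0)>0$ and $g''(0)<0$ from the explicit near-origin expression in (\ref{64}). But that expression is obtained (Step 1 of the proof of Lemma \ref{lemma62} in the appendix) by first invoking Lemma \ref{lemma81}(i) to conclude that $a(x)\leq\alpha$ in a right neighbourhood of $0$, and hence that the $a=\alpha$ branch of the ODE applies there; the constructions in the other two regimes likewise begin by citing the corresponding part of Lemma \ref{lemma81}. So the near-origin form of $g$ is downstream of the lemma you are proving and cannot serve as input to it. To close the argument you need an independent proof that the relevant $C^2$ solution of (\ref{61}) (equivalently (\ref{62})) is strictly increasing and strictly concave at the origin; in \cite{t1} this is Proposition-level work done before the case analysis, and it is the substantive content of the proof rather than a formality. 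With that input supplied, your projection-plus-evaluation argument does deliver all three equivalences and all three formulas, and the concavity moreover forces $\mu\alpha>\delta$ (resp.\ $\mu\beta>\delta$) in cases (i) and (iii), so the denominators in (\ref{81}) and (\ref{83}) are automatically nonzero.
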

\begin{proof} See Choulli, Taksar and Zhou\cite{t1} for details.
\end{proof}
\noindent Due to Lemma \ref{lemma81}, there are three different
cases to investigate. Since the proof of each case is similar, we
only give  sketch proofs of lemmas in case (i). Thus we suppose
$\frac{2\delta}{\mu}<\alpha$ throughout the following procedure to
prove these lemmas. \vskip 10pt \noindent
 {\bf Proof of lemma
\ref{lemma61}}. \ The complete proof is given in Choulli, Taksar and
Zhou\cite{t1}(2003). \ \ $\Box$

 \vskip 10pt\noindent
{\bf Proof of lemma \ref{lemma62}}. \vskip 10pt \noindent  {\bf Step
1}. For each $x\geq 0$ and $a\geq 0$, define
\begin{eqnarray}\label{84}
 h(x,a)=\frac{1}{2}\sigma^{2}a^{2}g^{''}(x)+
 (\mu a-\delta)g^{'}(x)-c g(x).
\end{eqnarray}
Then, by differentiating $h(x,a)$ w.r.t. $a$, we get the maximizing
function of $h(x,a)$
\begin{eqnarray}\label{85}
  a(x)=-\frac{\mu g^{'}(x)}{\sigma^2 g^{''}(x)}, \ \ x\geq 0.
\end{eqnarray}
In view of Lemma \ref{lemma81} (i), $a(x)\leq \alpha$ for all $x$ in
the right neighborhood of 0. Substituting $a=\alpha$ into
(\ref{62}), and solving the resulting second-order linear ODE, we
get
\begin{eqnarray}\label{86}
 g(x)=k_1(e^{r_+(\alpha)}x-e^{r_-(\alpha)x}),\ \  0\leq x< x_\alpha,
\end{eqnarray}
where $k_1$ and $x_\alpha$ are to be determined and for $x>0$
\begin{eqnarray}\label{87}
\left \{
 \begin{array}{l l l}
 r_+(x) = \frac{-(\mu x-\delta)+\sqrt{(\mu x-\delta)^2+2\sigma^2 c x^2}}
 {\sigma^2 x^2}\\
 r_-(x) = \frac{-(\mu x-\delta)-\sqrt{(\mu x-\delta)^2+2\sigma^2 c x^2}}
 {\sigma^2 x^2}.
 \end{array} \right.
\end{eqnarray}
Due to (\ref{85}) and (\ref{86}), for $x>0$
\begin{eqnarray*}
  a^{'}(x)=\frac{-\mu r_+(\alpha) r_-(\alpha)(r_+(\alpha)-r_-(\alpha))^2
         e^{(r_+(\alpha)+r_-(\alpha))x}}{\sigma^2 (g^{''}(x))^2}>0.
\end{eqnarray*}
Therefore $a(x)$ increases to $\alpha$ at the point $x_\alpha$ given
by
\begin{eqnarray}\label{88}
  x_\alpha=\frac{1}{r_+(\alpha)-r_-(\alpha)}log(\frac{r_-(\alpha)(\mu +
  \alpha \sigma^2 r_-(\alpha))}{r_+(\alpha)(\mu +\alpha \sigma^2 r_+(\alpha))})
  >0.
\end{eqnarray}
\vskip 10pt \noindent  {\bf Step 2}. In view of Proposition 8 in
\cite{t1}, $\alpha \leq a(x)\leq \beta$ in the right neighborhood of
$x_\alpha$. From (\ref{85}), we get
\begin{eqnarray}\label{89}
  g^{''}(x)=-\frac{\mu g^{'}(x)}{\sigma^2 a(x)}.
\end{eqnarray}
Substituting (\ref{89}) into (\ref{62}), differentiating the
resulting equation, and using (\ref{89}) again, we obtain
\begin{eqnarray}\label{810}
  a^{'}(x)=\frac{\mu^2+2 c \sigma^2}{\mu \sigma^2} (1-\frac{u}{a(x)}),
\end{eqnarray}
with
\begin{eqnarray}\label{811}
  u \equiv \frac{2\delta \mu}{\mu^2+2 c^2 \sigma^2}.
\end{eqnarray}
Integrating (\ref{810}), we get
\begin{eqnarray}\label{812}
  G(a(x))= \frac{\mu^2+2 c \sigma^2}{\mu \sigma^2}(x-x_\alpha)+G(\alpha),
\end{eqnarray}
where
\begin{eqnarray}\label{813}
  G(z)=z+u log(z-u).
\end{eqnarray}
Therefore
\begin{eqnarray}\label{814}
  a(x)=G^{-1}(\frac{\mu^2+2 c \sigma^2}{\mu \sigma^2}(x-x_\alpha)+G(\alpha)).
\end{eqnarray}
Obviously, $a(x)$ is increasing. Let $a(x_\beta)=\beta$, we get
\begin{eqnarray}\label{815}
  x_\beta & = & \frac{\mu \sigma^2}{\mu^2+2 c \sigma^2}[G(\beta)-G(\alpha)]+x_\alpha \nonumber\\
          & = & \frac{\mu \sigma^2}{\mu^2+2 c \sigma^2}(\beta-\alpha)+
          \frac{\mu \sigma^2 u}{\mu^2+2 c \sigma^2}log(\frac{\beta-u}{\alpha-u})+x_\alpha.
\end{eqnarray}
Solving (\ref{89}),(\ref{813}) and (\ref{814}), we obtain
\begin{eqnarray}\label{816}
  g(x)=g(x_\alpha)+g^{'}(x_\alpha)\int_{x_\alpha}^x exp(-\frac{\mu}{\sigma^2}
   \int_{x_\alpha}^y\frac{dv}{a(v)})dy,\ \ x_\alpha \leq x < x_\beta,\nonumber\\
\end{eqnarray}
where $g(x_\alpha)$ and $g^{'}(x_\alpha)$ are free constants to be
determined. From (\ref{86}) and (\ref{88}), we deduce
\begin{eqnarray}\label{817}
  g(x_\alpha)=\frac{\alpha \mu-2\delta}{2c} g^{'}(x_\alpha)
\end{eqnarray}
Let
\begin{eqnarray}\label{818}
  k_2 \equiv g^{'}(x_\alpha),
\end{eqnarray}
Then (\ref{86}) and (\ref{817}) imply
\begin{eqnarray}\label{819}
  k_1= \frac{\alpha \mu-2\delta}{2c(e^{r_+(\alpha)x_\alpha}-e^{r_-(\alpha)x_\alpha})} k_2
\end{eqnarray}
Substituting (\ref{817}) and (\ref{818}) into (\ref{816}), we get
\begin{eqnarray}\label{820}
  g(x)=k_2[\frac{\alpha \mu-2\delta}{2c} + \int_{x_\alpha}^x exp(-\frac{\mu}{\sigma^2}
   \int_{x_\alpha}^y\frac{dv}{a(v)})dy],\ x_\alpha \leq x < x_\beta.
\end{eqnarray}
\vskip 10pt \noindent  {\bf Step 3}. In view of Proposition 9 in
\cite{t1}, $a(x)\geq \beta$ holds for $x\geq x_\beta$. Substituting
$a=\beta$ into (\ref{62}), and solving it, we get the following
solution
\begin{eqnarray}\label{821}
  g(x)=k_3 e^{r_+(\beta)(x-b_0)}+ k_4 e^{r_-(\beta)(x-b_0)}, \ x_\beta \leq x<b,
\end{eqnarray}
where $k_3, k_4$ are free constants to be determined and
$r_{\pm}(\beta)$ are given by (\ref{87}). For $x\geq b$, the
solution has the following form
\begin{eqnarray}\label{822}
  g(x)=x-b+k_3 e^{r_+(\beta)(b-b_0)}+ k_4 e^{r_-(\beta)(b-b_0)}, \ x\geq b.
\end{eqnarray}
Next we apply the principle of smooth fit to determine the unknown
constants above. Note that
\begin{eqnarray}\label{823}
\left\{
\begin{array}{l l l}
  g(x_\beta -) = g(x_\beta +) \\
  g^{'}(x_\beta -) = g^{'}(x_\beta +),
\end{array}\right.
\end{eqnarray}
we arrive at
\begin{eqnarray}\label{824}
\left\{
\begin{array}{l l l}
  k_2 \xi = k_3 e^{r_+(\beta)(x_\beta-b_0)} + k_4 e^{r_-(\beta)(x_\beta-b_0)}\\
  k_2 \eta = k_3 r_+(\beta) e^{r_+(\beta)(x_\beta-b_0)} +
             k_4 r_-(\beta) e^{r_-(\beta)(x_\beta-b_0)},
\end{array}\right.
\end{eqnarray}
where
\begin{eqnarray}\label{825}
\left\{
\begin{array}{l l l}
  \xi = \frac{\alpha \mu-2\delta}{2c} + \int_{x_\alpha}^{x_\beta}
         exp(-\frac{\mu}{\sigma^2}\int_{x_\alpha}^y\frac{dv}{a(v)})dy \\
  \eta = exp(-\frac{\mu}{\sigma^2}\int_{x_\alpha}^{x_\beta}\frac{dv}{a(v)}).
\end{array}\right.
\end{eqnarray}
Solving (\ref{824}) for $k_3$ and $k_4$, we get
\begin{eqnarray}\label{826}
\left\{
\begin{array}{l l l}
  k_3 = \frac{\eta-\xi r_-(\beta)}{(r_+(\beta)-r_-(\beta))
         e^{r_+(\beta)(x_\beta-b_0)}} k_2 \equiv A k_2\\
  k_4 = \frac{\xi r_+(\beta)-\eta}{(r_+(\beta)-r_-(\beta))
         e^{r_-(\beta)(x_\beta-b_0)}} k_2 \equiv B k_2.
\end{array}\right.
\end{eqnarray}
Substituting (\ref{826}) into (\ref{821}) and using $g^{'}(b-)=1$,
we obtain
\begin{eqnarray}\label{827}
  k_2 = \frac{1}{A r_+(\beta)e^{r_+(\beta)(b-b_0)}+
        B r_-(\beta)e^{r_-(\beta)(b-b_0)}}.
\end{eqnarray}
Thus, $k_1, k_2, k_3, k_4$ are determined by (\ref{819}),(\ref{826})
and (\ref{827}). We claim that
\begin{eqnarray}\label{828}
   g^{''}(b_-)\geq 0.
\end{eqnarray}
In order to prove this statement, we consider $f(x)$ in Lemma
\ref{lemma61} and notice that $A$ and $B$ in (\ref{826}) have the
same expression both in $f(x)$ and $g(x)$. Since $f^{'}(b_0)=1$ and
$f^{''}(b_0)=0$,
\begin{eqnarray}\label{829}
\left\{
\begin{array}{l l l}
   k^f_2 (A r_+(\beta)+ B r_-(\beta)) = 1\\
   k^f_2 (A r^2_+(\beta)+ B r^2_-(\beta)) = 0,
\end{array}\right.
\end{eqnarray}
where $k^f_2$ is the corresponding constant in Lemma \ref{lemma61}.
From (\ref{829}), we know that $A<0, B>0$ due to $r_+(\beta)>0,
r_-(\beta)<0)$
 and $k^f_2>0$ in $f(x)$. In addition, if we let
\begin{eqnarray}\label{830}
   l(b)\equiv g^{''}_3(b-) = \frac{A r^2_+(\beta)e^{r_+(\beta)(b-b_0)}+
        B r^2_-(\beta)e^{r_-(\beta)(b-b_0)}}{A r_+(\beta)e^{r_+(\beta)(b-b_0)}+
        B r_-(\beta)e^{r_-(\beta)(b-b_0)}},
\end{eqnarray}
then,
\begin{eqnarray}\label{831}
  \frac{\partial l}{\partial b}& = &\frac{A B(r^3_+(\beta)r_-(\beta)+
       r_+(\beta)r^3_-(\beta)-2 r^2_+(\beta)r^2_-(\beta)) e^{(r_+(\beta)+r_(\beta))
       (b-b_0)}}{(A r_+(\beta)e^{r_+(\beta)(b-b_0)}+
        B r_-(\beta)e^{r_-(\beta)(b-b_0)})^2} \nonumber \\
        & > & 0 \nonumber \\
\end{eqnarray}
holds for $A>0, B<0, r_+(\beta)>0$ and $r_-(\beta)<0$. Since
$b>b_0$, we conclude that
\begin{eqnarray}\label{832}
  g^{''}(b-) = l(b) > l(b_0) = \tilde{g}^{''}(b_0) = 0,
\end{eqnarray}
where $\tilde{g}(x)$ is the solution of (\ref{62}) with replacing
$b$ by $b_0$. \vskip 10pt \noindent  {\bf Step 4}.
 Now we only need to prove the solution $g(x)$
satisfies (\ref{62}). It suffices to prove the following conditions,
\begin{eqnarray*}
  \max\limits_{a\in[\alpha,\beta]}[\frac{1}{2}\sigma^{2}a^{2}g^{''}(x)+
  (\mu a-\delta)g^{'}(x)-c g(x)]=0,\ \mbox{for}\ x \geq b.
\end{eqnarray*}
By (\ref{62}), (\ref{832}) and noticing that $g^{'}(b-)=1$, we get
\begin{eqnarray}\label{833}
  \max\mathcal {L}g(x)& = & \mu a - \delta - c (x-b+g(b)) \nonumber \\
                      & \leq & \mu a - \delta - c g(b) \nonumber \\
                      & \leq & \frac{1}{2}\sigma^2 a^2 g^{''}(b-)
                      + (\mu a - \delta)g^{'}_3(b-) - c g(b)\nonumber \\
                      & \leq & 0.
\end{eqnarray}
Thus, we complete the proof.  We summarize the solution as follows.
For $\frac{2\delta}{\mu} < \alpha$,
\begin{eqnarray}\label{834} g(x)=\left\{
   \begin{array}{l l l}
    k_1(e^{r_+(\alpha)x}-e^{r_-(\alpha)x}),\ \ 0\leq x<x_\alpha,\\
  k_2[\frac{\alpha \mu-2\delta}{2c} + \int_{x_\alpha}^x \exp(-\frac{\mu}
     {\sigma^2} \int_{x_\alpha}^y\frac{dv}{a(v)})dy],\ \ x_\alpha \leq x < x_\beta, \\
   k_3 e^{r_+(\beta)(x-b_0)}+ k_4 e^{r_-(\beta)(x-b_0)}, \ \ x_\beta \leq x<b,\\
   x-b+k_3 e^{r_+(\beta)(b-b_0)}+ k_4 e^{r_-(\beta)(b-b_0)}, \ \ x\geq b,
   \end{array} \right.\nonumber \\
\end{eqnarray}
where $r_{\pm}(x), x_\alpha, x_\beta, k_1, k_2, k_3$ and $k_4$ are
defined by (\ref{87}), (\ref{88}), (\ref{815}), (\ref{819}),
(\ref{827}) and (\ref{826}), respectively. \ \ $\Box$
 \vskip
10pt\noindent {\bf Proof of lemma \ref{lemma63}}. \ For $b\geq b_0,
A<0,B>0$, together with (\ref{830}) and (\ref{831}), we have
\begin{eqnarray*}
  \frac{\partial }{\partial b}g(b,x)&=& -\frac{(\alpha \mu-2\delta)}
    {2 c(e^{r_+(\alpha)x_\alpha-r_-(\alpha)x_\alpha})}\\
    &&\times\frac{A r^2_+(\beta)e^{r_+(\beta)(b-b_0)}+B r^2_-(\beta)e^{r_-(\beta)(b-b_0)}}
    {(A r_+(\beta)e^{r_+(\beta)(b-b_0)}+B r_-(\beta)e^{r_-(\beta)(b-b_0)})^2}\\
    & \leq & 0, \ \ \ 0\leq x<x_\alpha; \\
 \frac{\partial }{\partial b}g(b,x)&=&
    -(\frac{\alpha \mu-2\delta}{2c} + \int_{x_\alpha}^x exp(-\frac{\mu}{\sigma^2}
   \int_{x_\alpha}^y\frac{dv}{a(v)})dy)\\
   &&\times\frac{A r^2_+(\beta)e^{r_+(\beta)(b-b_0)}+B r^2_-(\beta)e^{r_-(\beta)(b-b_0)}}
   {(A r_+(\beta)e^{r_+(\beta)(b-b_0)}+B r_-(\beta)e^{r_-(\beta)(b-b_0)})^2}\\
   & \leq & 0, \ \ \ x_\alpha \leq x < x_\beta; \\
 \frac{\partial }{\partial b}g(b,x)&=&
   (A e^{r_+(\beta)(b-b_0)}+B e^{r_-(\beta)(b-b_0)})\\
   &&\times\frac{(A r^2_+(\beta)e^{r_+(\beta)(b-b_0)}+B r^2_-(\beta)e^{r_-(\beta)(b-b_0)})}
   {(A r_+(\beta)e^{r_+(\beta)(b-b_0)}+B r_-(\beta)e^{r_-(\beta)(b-b_0)})^2}\\
   & \leq & 0, \ \ \ x_\beta \leq x<b;\\
 \frac{\partial }{\partial b}g(b,x)&=&
   (A e^{r_+(\beta)(b-b_0)}+B e^{r_-(\beta)(b-b_0)})\\
   &&\times\frac{(A r^2_+(\beta)e^{r_+(\beta)(b-b_0)}+B r^2_-(\beta)e^{r_-(\beta)(b-b_0)})}
   {(A r_+(\beta)e^{r_+(\beta)(b-b_0)}+B r_-(\beta)e^{r_-(\beta)(b-b_0)})^2}\\
   & \leq & 0, \ \ \ x\geq b.
\end{eqnarray*}
Thus, the proof is completed.\ \ $\Box$
 \vskip 10pt\noindent {\bf
Proof of lemma \ref{lemma64}}.\ We can prove that
$\mathbf{P}[\tau_b^{\pi^*_b}\leq T]$ is decreasing in $b$ along the
lines of Theorem 3.1 in \cite{ime02}(2008). Here we only need to
prove that $\mathbf{P}[\tau_b^{\pi^*_b}\leq T]$ is strictly
decreasing in $b$ on $[x_\beta, D)$. We denote
$\mathbf{P}[\tau_b^{\pi^*_b}\leq T]$ by $\mathbf{P}[\tau_b^b\leq T]$
for simplicity. \vskip 10pt\noindent For any $b_1, b_2$ satisfying
$D\geq b_2\geq b_1 \geq x_\beta$, we need to prove that
\begin{eqnarray*}
\mathbf{P}[\tau_{b_1}^{b_1}\leq T]-\mathbf{P}[\tau_{b_2}^{b_2}\leq
T]>0
\end{eqnarray*}
By comparison theorem, we have
$$ \mathbf{P}[\tau_{b_1}^{b_1}\leq T]-\mathbf{P}[\tau_{b_2}^{b_2}\leq T]\geq
\mathbf{P}[\tau_{b_1}^{b_2}\leq T]-\mathbf{P}[\tau_{b_2}^{b_2}\leq
T].
$$
So we the proof can be reduced to proving that
\begin{eqnarray}\label{835}
\mathbf{P}[\tau_{b_1}^{b_2}\leq T]-\mathbf{P}[\tau_{b_2}^{b_2}\leq
T]>0.
\end{eqnarray}
\vskip 10pt\noindent Define stochastic processes $R_t^{(1)}$,
$R_t^{(2)}$, $R_t^{(3)}$, $R_t^{(4)}$ by the following SDEs:
\begin{eqnarray*}
dR_t^{(1)}&=&[\mu
a_{b_2}^*(R_t^{(1)})-\delta]dt+a_{b_2}^*(R_t^{(1)})\sigma d\mathcal
{W}_t-dL_t^{b_2}, R_0^{(1)}=b_1;\\
dR_t^{(2)}&=&[\mu
a_{b_2}^*(R_t^{(2)})-\delta]dt+a_{b_2}^*(R_t^{(2)})\sigma
d\mathcal{W}_t-dL_t^{b_2}, R_0^{(2)}=b_2;\\
dR_t^{(3)}&=&[\mu
a_{b_2}^*(R_t^{(3)})-\delta]dt+a_{b_2}^*(R_t^{(3)})\sigma
d\mathcal{W}_t-dL_t^{b_2}, R_0^{(3)}=\frac{b_1+b_2}{2};\\
dR_t^{(4)}&=&[\mu
a_{b_2}^*(R_t^{[4]})-\delta]dt+a_{b_2}^*(R_t^{(4)})\sigma
d\mathcal{W}_t, R_0^{(4)}=\frac{b_1+b_2}{2},
\end{eqnarray*}
respectively, where $D\geq b_2\geq b_1\geq x_\beta$ and $
a^{*}(\cdot)$ is defined by (\ref{66}). \vskip 10pt \noindent First,
let $\tau^{b_1}=\inf\limits_{t\geq 0}\{t:R_t^{(2)}=b_1\}$, $A= \{
\tau^{b_1}\leq T\}$, $B=\big\{\sl R_{t}^{(2)}$ will go to bankruptcy
in a time interval $[\tau^{b_1},\tau^{b_1}+T]$ and $\tau^{b_1}\leq T
\big \}$,  $D=\{\inf\limits_{0\leq t\leq T}R_t^{(3)}>b_1\}$ and
    $E=\{\inf\limits_{0\leq t\leq T}R_t^{(4)}>b_1,\sup\limits_{0\leq
t\leq T}R_t^{(4)} <b_2\}$. Then
\begin{eqnarray} \label{836}
\{\tau_{b_2}^{b_2}\leq T\}\subset B \subset A.
\end{eqnarray}
 Moreover, by using strong Markov property of
$R_t^{[2]}$, we have
\begin{eqnarray}\label{837}
\mathbf{P}[\tau_{b_1}^{b_2}\leq T]=\mathbf{P}[B|A].
\end{eqnarray}
By comparison theorem on SDE, we have
\begin{eqnarray} \label{838}
\mathbf{P}(A^c)\geq \mathbf{P}(D)\geq \mathbf{P}(E).
\end{eqnarray}
Since $a_{b_2}^*(x)=\beta$ we have
\begin{eqnarray}\label{839}
R_t^{(4)}= \frac{b_1+b_2}{2}+ [\mu\beta-\delta]t+\sigma\mathcal{W}_t
\ \mbox{ on $ E$}.
\end{eqnarray}
We deduce from (\ref{839}) and properties of Brownian motion with
drift (cf. Andrei,Borodin,Paavo,Salminen \cite{s110} (2002)) that
\begin{eqnarray*}
\mathbf{P}(E)=\frac{e^{-\mu'^2T/2}}{\sqrt{2\pi
T}}\sum_{k=-\infty}^{\infty}\int_{b_1/\sigma\beta}^{b_2/\sigma\beta}e^{\mu'(z-x)}
[(e^{-(z-x+\frac{2k(b_2-b_1)}{\sigma\beta})^2/2T})\\
-(e^{-(z+x-\frac{2b_1-2k(b_2-b_1)}{\sigma\beta})^2/2T})]dz>0,\nonumber
\end{eqnarray*}
where $\mu'=(\beta\mu-\delta)/\sigma$ and
$x=\frac{b_1+b_2}{2\sigma\beta}$. Thus we get
\begin{eqnarray}\label{840}
\mathbf{P}(A^c)>0.
\end{eqnarray}
We also know from Theorem \ref{theorem41}  that
$\mathbf{P}[\tau_{b_1}^{b_2}\leq T]\geq
\mathbf{P}[\tau_{b_2}^{b_2}\leq T]>0$, which together with
(\ref{836}), (\ref{837}) and (\ref{840}), implies that
\begin{eqnarray*}
 \mathbf{P}[\tau_{b_1}^{b_2}\leq T]-\mathbf{P}[\tau_{b_2}^{b_2}\leq T]
 &\geq &
\mathbf{P}[\tau_{b_1}^{b_2}\leq T]-\mathbf{P}(B)\nonumber\\
&=& \mathbf{P}[\tau_{b_1}^{b_2}\leq T]-\mathbf{P}(A)\mathbf{P}(B|A)\nonumber\\
&=&\mathbf{P}[\tau_{b_1}^{b_2}\leq T](1-\mathbf{P}(A))\nonumber\\
&=&\mathbf{P}[\tau_{b_1}^{b_2}\leq T]\mathbf{P}(A^c) \nonumber\\
&>& 0.
\end{eqnarray*}
 Thus the proof is completed. $\Box$

\vskip 10pt\noindent {\bf Proof of lemma \ref{lemma65}}. \  Let
$\phi(t,x)\equiv \phi^{b}(t,x)$. Setting $ \tau_{x}^{b}:=
\tau_{x}^{\pi^*_b}$ and applying  the generalized It\^{o}
 formula to $(R^{\pi_{b}^\ast,x}_t, L_t^{\pi_{b}^\ast} )$  and $\phi(t,x)$
 we have for $0<x\leq b$,
\begin{eqnarray}\label{841}
  \phi(T-(t\wedge\tau_{x}^{b}),R^{\pi_{b}^\ast,x}_{t\wedge\tau_{x}^{b}})
  &=&\phi(T,x)\nonumber\\
  &+&\int_{0}^{t\wedge\tau_{x}^{b}}(\frac{1}{2}a^{*2}(R^{\pi_{b}^\ast,x}_{s})
  \sigma^{2}\phi_{yy}(T-s,R^{\pi_{b}^\ast,x}_{s})\nonumber\\
  &+&(a^*_b(R^{\pi_{b}^\ast,x}_{s})\mu-\delta)\phi_{x}(T-s,R^{\pi_{b}^\ast,x}_{s})
  \nonumber\\
  &-&\phi_{t}(T-s,R^{\pi_{b}^\ast,x}_{s}))ds-\int_{0}^{t\wedge\tau_{x}^{b}}
  \phi_{y}(T-s,R^{\pi_{b}^\ast,x}_{s})dL_{s}^{b}\nonumber\\
  &+&\int_{0}^{t\wedge\tau_{x}^{b}}
  a^*(R^{\pi_{b}^\ast,x}_{s})\sigma\phi_{x}(T-s,R^{\pi_{b}^\ast,x}_{s})dW_{s}.\nonumber \\
\end{eqnarray}
Setting $t=T$ and taking mathematical expectation at both sides of
(\ref{841}) yield that
\begin{eqnarray*}
\phi(T,x)&=&\mathbf{E}[\phi(T-(T\wedge\tau_{x}^{b}),
R^{\pi_{b}^\ast,x}_{T\wedge\tau_{x}^{b}}
)]\\
&=&\mathbf{E}[\phi(0, R^{\pi_{b}^\ast,x}_{T})1_{T<\tau_{x}^{b}}]+
\mathbf{E}[\phi(T-\tau_{x}^{b},0)1_{T\geq\tau_{x}^{b}})]\\
&=&\mathbf{E}[1_{T<\tau_{x}^{b}}]\\
&=& 1-\psi(T,x).
\end{eqnarray*}
Thus we complete the proof.\ \ $\Box$

 \vskip 10pt\noindent Define $u(x):=\frac{1}{2}a^{*2}(x)\sigma^2$,
 $v(x):=a^{*2}(x)\mu-\delta$,  the equation (4.13) becomes
\begin{eqnarray}\label{842}
\phi_{t}^{b}(t,x)=u(x)\phi_{xx}^{b}( t,x)+v(x)\phi_{x}^{b}(t,x).
\end{eqnarray}
Obviously, $u(x)$ and $v(x)$ are continuous in $[0,b]$ due to the
fact that $a^{*}(x)$ is continuous w.r.t $x$. Thus there exists a
unique solution in $C^1(0,\infty)\cap C^2(0,b)$ for (\ref{67}).
Moreover, $u^{'}(x)$, $v^{'}(x)$, $u^{''}(x)$ are bounded in
$(0,x_\alpha)$, $(x_\alpha, x_\beta)$, $(x_\beta, b)$, respectively.
Now we are ready to prove that the bankrupt probability
$\psi^{b}(T,b)$ is continuous with respect to the dividends barrier
$b(b\geq b_0)$.
 \vskip 10pt\noindent {\bf Proof of lemma
\ref{lemma66}}. \ It suffices to prove that $\phi^{b}(t,x)$ is
continuous in b. Let $x=by$ and  $\theta^{b}(t,y)=\phi^{b}(t,by)$,
the equation (\ref{67}) becomes
\begin{eqnarray}\label{843}
\left\{
\begin{array}{l l l}
\theta_t^{b}(t,y)=[u(by)/b^2]\theta_{yy}^{b}(t,z)+[v(by)/b]\theta_y^{b}(t,y),\\
\theta^{b}(0,y)=1, \ \mbox{for}\  0<y\leq 1, \\
\theta^{b}(t,0)=0,\theta_{y}^{b}(t,1)=0,\ \mbox{for} \  t>0.
\end{array}\right.
\end{eqnarray}
So the proof of Lemma \ref{lemma66} can be reduced to proving
$\lim\limits_{b_2\rightarrow
b_1}\theta^{b_2}(t,1)=\theta^{b_1}(t,1)$ for fixed $b_1>b_0$.
Setting $w(t,y)=\theta^{b_2}(t,y)-\theta^{b_1}(t,y)$ and noticing
that $\theta^{b}(t,y)$ is continuous at $y=1$ for any $b\geq b_0$,
it suffices to show that
\begin{eqnarray}\label{844}
\int_0^{t}\int_0^1 w^2(s,y)dyds \rightarrow 0,\ \mbox{as}\ b_2
\rightarrow b_1.
\end{eqnarray}
Thus we have
\begin{eqnarray}\label{845}
\left\{
\begin{array}{l l l}
w_t(t,y)&=&[u(b_2y)/b_2^2]w_{yy}(t,y)+[v(b_2y)/b_2]w_{y}(t,y)\\
&+& \{u(b_2y)/b_2^2-u(b_1z)/b_1^2\}\theta_{yy}^{b_1}(t,y)\\
&+&\{u(b_2y)/b_2^2-u(b_1y)/b_1^2\}\theta_y^{b_1}(t,y),\\
w(0,y)&=&0,\ \mbox{for}\ 0<y\leq 1,\\
w(t,0)&=&0,\ w_y(t,1)=0,\ \mbox{for}\ t>0.
\end{array}\right.
\end{eqnarray}
Multiplying the first equation in (\ref{845}) by $w(t,y)$ and then
integrating on $[0, 1]\times[0,t]$,
\begin{eqnarray}\label{846}
 &&\int_0^{t}\int_0^1 w(s,y)w_t(s,y)dyds
  \nonumber\\
  &=&\int_0^{t}\int_0^1 \big\{[u(b_2y)/b_2^2]w(s,y)w_{yy}(s,y)
  \nonumber\\
  &&+\int_0^{t}\int_0^1 [v(b_2y)/b_2]w(s,y)w_y(s,y)\nonumber\\
  &&+\int_0^{t}\int_0^1 [u(b_2y)/b_2^2-u(b_1y)/b_1^2]w(s,y)\theta_{yy}^{b_1}(t,y)
  \nonumber\\
  &&+\int_0^{t}\int_0^1w(s,y)[v(b_2y)/b_2-v(b_1y)/b_1]w(s,y)\theta_y^{b_1}(t,y)\big \}dyds
  \nonumber\\
&\equiv & E_1 +E_2+E_3+E_4.
\end{eqnarray}
We now estimate each terms at both sides of (\ref{846}) as follows.
\begin{eqnarray}\label{847}
  \int_0^{t}\int_0^1 w(s,y)w_t(s,y)dyds&=&\int_0^1 \frac{1}{2}w^2(t,y)dy.
\end{eqnarray}
By property of $a^*(x)$ and the definition of $u(x)$ and $v(x)$
there exit positive constants $D_1 $, $D_2  $ and $ D_3 $ such that
$[v(b_2y)/b_2]^2\leq D_1$, $[u(by)/b^2]'\geq 0$,
$[u(b_2y)/b_2^2]\geq D_2$ and  $[a(b_2y)/b_2^2]'\leq D_3$. For any
$\lambda_1>0$ and $\lambda_2>0$, by these facts and Young's
inequality, we estimate $E_1$ and $E_2$ as follows,
\begin{eqnarray}\label{848}
  E_1&=&\int_0^{t}\int_0^1[u(b_2y)/b_2^2]w(s,y)w_{yy}(s,y)dyds\nonumber\\
    &=& -\int_0^{t}\int_0^1 [u(b_2y)/b_2^2]w_y^2(s,y)dyds\nonumber\\
    &&-\int_0^{t}\int_{x_\alpha/b_2}^{x_\beta/b_2}[u(b_2y)/b_2^2]^{'}w_y(s,y)w(s,y)dyds
    \nonumber\\ &\leq & -D_2\int_0^{t}\int_0^1 w_y^2(s,y)dyds\nonumber\\
    && +D_3\int_0^{t}\int_0^1 [\lambda_1 w_y^2(s,y)+\frac{1}{4\lambda_1}
    w^2(s,y)]dyds,
\end{eqnarray}
\begin{eqnarray}\label{449}
   E_2&=&\int_0^{t}\int_0^1[v(b_2y)/b_2]w(s,y)w_y(s,y)dyds\nonumber\\
      &\leq
      &\lambda_2\int_0^{t}\int_0^1 w_y^2(s,y)dyds\nonumber\\
      &&+\frac{D_1}{4\lambda_2}\int_0^{t}\int_0^1 w^2(s,y)dyds.
\end{eqnarray}
It is easy to see that  $[u(by)/b^2]$, $[u(by)/b^2]'$and $[v(by)/b]$
are Lipschitz continuous for all $y\in(x_\alpha/b_2,x_\beta/b_1)$,
that is, there exists $L>0$ such that
\begin{eqnarray}\label{850}
 &&|[u(b_2y)/b_2^2]-[u(b_1y)/b_1^2]|\leq L|b_2-b_1|,\nonumber \\
 &&|[u(b_2y)/b_2^2]'-[u(b_1y)/b_1^2]'|\leq L|b_2-b_1|,\nonumber\\
 &&|[v(b_2y)/b_2]-[v(b_1y)/b_1]|\leq L|b_2-b_1|.
\end{eqnarray}
 $E_3$ has the following expressions,
\begin{eqnarray}\label{851}
  E_3&=&\int_0^t\int_0^1 \{u(b_2y)/b_2^2-u(b_1y)/b_1^2\}w(s,y)\theta_{yy}^{b_1}(s,y)dyds
   \nonumber\\
  &=&-\int_0^t\int_0^1 [u(b_2y)/b_2^2u(b_1y)/b_1^2]w_y(s,y)\theta_{y}^{b_1}(s,y)dyds
   \nonumber\\
  &&-\int_0^t\int_0^{x_\alpha/b_2}[u(b_2y)/b_2^2-u(b_1y)/b_1^2]'
    w(s,y)\theta_{y}^{b_1}(s,y) dyds \nonumber\\
  &&-\int_0^t\int_{x_\alpha/b_2}^{x_\alpha/b_1}[u(b_2y)/b_2^2-u(b_1y)/b_1^2]'
    w(s,y)\theta_{y}^{b_1}(s,y)dyds \nonumber\\
  &&-\int_0^t\int_{x_\alpha/b_1}^{x_\beta/b_2}[u(b_2y)/b_2^2-u(b_1y)/b_1^2]'
    w(s,y)\theta_{y}^{b_1}(s,y)dyds  \nonumber\\
  &&-\int_0^t\int_{x_\beta/b_2}^{x_\beta/b_1}[u(b_2y)/b_2^2-u(b_1y)/b_1^2]'
    w(s,y)\theta_{y}^{b_1}(s,y)dyds  \nonumber\\
  &&-\int_0^t\int_{x_\beta/b_1}^{1}[u(b_2y)/b_2^2-u(b_1y)/b_1^2]'
    w(s,y)\theta_{y}^{b_1}(s,y)dyds  \nonumber\\
  &=&-\int_0^t\int_0^1 [u(b_2y)/b_2^2u(b_1y)/b_1^2]w_y(s,y)\theta_{y}^{b_1}(s,y)dyds
   \nonumber\\
  &&-\int_0^t\int_{x_\alpha/b_2}^{x_\alpha/b_1}[u(b_2y)/b_2^2-u(b_1y)/b_1^2]'
    w(s,y)\theta_{y}^{b_1}(s,y)dyds \nonumber\\
  &&-\int_0^t\int_{x_\alpha/b_1}^{x_\beta/b_2}[u(b_2y)/b_2^2-u(b_1y)/b_1^2]'
    w(s,y)\theta_{y}^{b_1}(s,y)dyds  \nonumber\\
  &&-\int_0^t\int_{x_\beta/b_2}^{x_\beta/b_1}[u(b_2y)/b_2^2-u(b_1y)/b_1^2]'
    w(s,y)\theta_{y}^{b_1}(s,y)dyds  \nonumber\\
  &=& E_{31}+E_{32}+E_{33}+E_{34}.
\end{eqnarray}
By (\ref{850}) and (\ref{851})and Young's inequality for any
$\lambda_3>0$ and $\lambda_4>0$,
\begin{eqnarray}\label{852}
E_{31}&=& -\int_0^t\int_{0}^{1}
   [u(b_2y)/b_2^2-u(b_1y)/b_1^2]w(s,y)\theta_{y}^{b_1}(s,y)dyds\nonumber\\
   &\leq &\frac{L^2(b_2-b_1)^2}{4\lambda_3}\int_0^t\int_0^1
  [\theta_{y}^{b_1}(s,y)]^2dyds\nonumber\\
  && +\lambda_3\int_0^t\int_0^1 [w_y^2(s,y)+w^2(s,y)]dyds
\end{eqnarray}
and
\begin{eqnarray}\label{853}
 E_{33}&=& -\int_0^t\int_{0}^{m/b_2}
[u(b_2y)/b_2^2-u(b_1y)/b_1^2]'w(s,y)\theta_{y}^{b_1}(s,y)dyds\nonumber\\
&\leq &\frac{L^2(b_2-b_1)^2}{4\lambda_4}\int_0^t\int_0^1
[\theta_{y}^{b_1}(s,y)]^2dyds\nonumber\\
&& +\lambda_4\int_0^t\int_0^1 [w_y^2(s,y)+w^2(s,y)]dyds.
\end{eqnarray}
There exists a constant $D_4>0$ such that
 $|[u(by)/b^2]'-[v(by)/b]|\leq D_4$ and
 $\lambda_5=\inf\limits_{b_1\leq b\leq b_2}\{u(by)/b^2\}>0$.
Then by the boundary conditions we estimate $\int_0^t\int_0^1
[\theta_{y}^{b}(s,y)]^2dyds$ for $b\in [b_1,b_2]$ as follows,
\begin{eqnarray}\label{854}
  0&=&\int_0^t\int_0^1 \theta_{t}^{b}(s,y)\theta^{b}(s,y)\nonumber\\
  && -[u(by)/b^2]\theta_{yy}^{b}(s,y)\theta^{b}(s,y)
  -[v(by)/b]\theta_{y}^{b}(s,y)\theta^{b}(s,y)dyds\nonumber\\
  &=&\frac{1}{2}\int_0^1 [\theta^{b}(s,y)]^2dy
  +\int_0^t\int_0^1 [u(by)/b^2] [\theta_{y}^{b}(s,y)]^2dyds\nonumber\\
  &&+\int_0^t\int_0^1
  [u(by)/b^2]'\theta_{y}^{b}(s,y)\theta^{b}(s,y)dyds\nonumber\\
  &&-\int_0^t\int_0^1 [v(by)/b]\theta_{y}^{b}(s,y)\theta^{b}(s,y)dyds\nonumber\\
  &\geq & \lambda_5 \int_0^t\int_0^1 [\theta_{y}^{b}(s,y)]^2dyds
  -\frac{\lambda_5}{2}\int_0^t\int_0^1 [\theta_{y}^{b}(s,y)]^2dyds\nonumber\\
  &&-\frac{1}{2\lambda_5}\int_0^t\int_0^1 [\theta^{b}(s,y)]^2dyds\nonumber\\
  &\geq &\frac{\lambda_5}{2}\int_0^t\int_0^1
  [\theta_{y}^{b}(s,y)]^2dyds -\frac{D_4}{2\lambda_5}
\end{eqnarray}
from which we deduce that
\begin{eqnarray}\label{855}
  \int_0^t\int_0^1 [\theta_{y}^{b}(s,y)]^2dyds\leq
  \frac{D_4}{\lambda_5^2}.
\end{eqnarray}
Therefore we conclude that $\int_0^t\int_0^1
[\theta_{y}^{b}(s,y)]^2dyds$ is bounded. Noticing that $w(s,y)\leq
2$ and
\begin{eqnarray}\label{856}
\lim\limits_{b_2\rightarrow b_1}\{|E_{32}|+|E_{34}|\}=0,
\end{eqnarray}
as well as using the equalities (\ref{851})-(\ref{856}), there
exists a positive function $B_1^{b_1}(b_2)$ such that
\begin{eqnarray*}
\lim\limits_{b_2\rightarrow b_1}B_1^{b_1}(b_2)=0
\end{eqnarray*}
and for $ 0\leq t\leq T$
\begin{eqnarray}\label{857}
E_3&=&\int_0^t\int_{0}^{1}[u(b_2y)/b_2^2-u(b_1y)/b_1^2]w(s,y)\theta_{yy}^{b_1}(t,y)dyds
\nonumber\\
&\leq & B_1^{b_1}(b_2)+(\lambda_3 + \lambda_4 )\int_0^t\int_0^1
[w_y^2(s,y)+w^2(s,y)]dyds.\nonumber\\
\end{eqnarray}
By the same way as in estimating $E_3$ we also find  a positive
function $B_2^{b_1}(b_2)$ such that
\begin{eqnarray*}
\lim\limits_{b_2\rightarrow b_1}B_2^{b_1}(b_2)=0
\end{eqnarray*}
and for any $\lambda_6 >0$
\begin{eqnarray}\label{858}
E_4&=&\int_0^t\int_0^1 [v(b_2y)/b_2-v(b_1y)/b_1]w(s,y)
\theta_y^{b_1}(s,y)dyds\nonumber\\
&\leq &\frac{L^2(b_2-b_1)^2}{4\lambda_6}\int_0^t\int_0^1
  [\theta_{y}^{b_1}(s,y)]^2dyds+\lambda_6\int_0^t\int_0^1 w^2(s,y)dyds\nonumber\\
&\leq &  B_2^{b_1}(b_2)+\lambda_6\int_0^t\int_0^1 w^2(s,y)dyds.
\end{eqnarray}
Choosing  $\lambda_1$, $\lambda_2$, $\lambda_3$ and $\lambda_4$
small enough such that $\lambda_1 D_3+ \lambda_2 + \lambda_3+
\lambda_4\leq D_2$, we deduce from (\ref{846})-(\ref{449}),
(\ref{857}) and (\ref{858}) that there exist positive constants
$C_1$ and $C_2$ such that
\begin{eqnarray*}
\int_0^1w^2(t,y)dy\leq
C_1\int_0^t\int_0^1w^2(s,y)dyds+C_2[B_1^{b_1}(b_2)+B_2^{b_1}(b_2)].
\end{eqnarray*}
Setting $F(t)=\int_0^t\int_0^1w^2(s,y)dyds$ and using the Gronwall
inequality,
\begin{eqnarray*}
F(t)\leq C_2[B_1^{b_1}(b_2)+B_2^{b_1}(b_2)]\exp\{C_1t\}.
\end{eqnarray*}
So
\begin{eqnarray*}
\lim\limits_{b_2\rightarrow
b_1}\int_0^t\int_0^1[\theta^{b_2}(s,y)-\theta^{b_1}(s,y)]^2dyds=0.
\end{eqnarray*}
Thus we complete the proof.\ \ $\Box$
 \vskip 10pt\noindent {\bf
Acknowledgements.} This work is supported by Project 10771114 of
NSFC, Project 20060003001 of SRFDP, the SRF for ROCS, SEM  and the
Korea Foundation for Advanced Studies. We would like to thank the
institutions for the generous financial support. Special thanks also
go to the participants of the seminar stochastic analysis and
finance at Tsinghua University for their feedbacks and useful
conversations.
 \setcounter{equation}{0}


\begin{thebibliography}{99}
\baselineskip 16pt
\bibitem{s110} Andrei,N,Borodin.,Paavo,Salminen.,2002. Handbook of
Brownian Motion. ISBN 3-7643-6705-9.
\bibitem{s7} Asmussen, S., Taksar,
M., 1997. Controlled Diffusion Models for Optimal Dividend Pay-out.
Insurance: Math. Econ., Vol. 20, 1-15.
\bibitem{s8} Asmussen, S., H{\o}jgaard, B., Taksar, M., 2000.
Optimal Risk Control and Dividend Distribution Policies: Example of
Excess-of-Loss Reinsurance for an insurance corporation, Finance
Stochast.. Vol. 4, 199-324.
\bibitem{AV} Avanzi, B., 2009. Strategies for Dividend Distribution:
A Review. North American Actuarial Journal,Vol. 13, No. 2, pp.
217-251.
\bibitem{s6} Cadenillas, A., Choulli, T., Taksar M., Zhang Lei, 2006.
Classical and Impulse Stochastic Control for the Optimization of the
Dividend and Risk Policies of an Insurance Firm. Mathematical
Finance, Vol. 16, No. 1, 181-202.
\bibitem{t3} Choulli, T., Taksar M., Xunyu  Zhou, 2001. Excess-of-loss
Reinsurance for a Company with Debt Liability and Constraints on
Risk Reduction. Auantitative Finance, 1, pp. 573-596.
\bibitem{t1} Choulli, T., Taksar M., Xunyu  Zhou, 2003.
An Optimal Diffusion Model of a Company with Contraints on Risk
Control. SIAM J. Control Optim. 41 1946-1979. MR1972542
\bibitem{t2} Choulli, T., Taksar M., Xunyu  Zhou, 2004. Inerplay Between
Dividend Rate and Business Constraints for a Financial Corporation.
The Annals of Applied Probability, Vol. 14, No. 4, 1810-1837.
\bibitem{C1} Emanuel D C,Harrison, J.M. and Taylor A. J., 1975. A
diffusion approximation for the ruin probability with compounding
assets. Scandinavian Acturial Journal {\bf 75}, 240-247.
\bibitem{C2} Grandell J., 1977. A class of approximations of ruin
probabilities. Scandinavian Acturial Journal Suppl.{\bf 77}, 37-52.
\bibitem{C3} Grandell J., 1978. A remark on a class of approximations of ruin
probabilities. Scandinavian Acturial Journal{\bf 78}, 77-78.
\bibitem{C4}Grandell J. 1990.  Aspect of risk theory ( New York:
Springer).
\bibitem{s1} Guo,Xin , Liu,Jun ,
Zhou, Xunyu , 2004. A Constrained Nonlinear Regular-singular
Stochastic Control Problem, with application. Stochastic Processes
and Their Applications, Vol. 109, pp. 167-187.
\bibitem{s10101} Harrison, J.M.; Taksar, M.J.,1983. Instant control of
Brownian motion, Mathematics of Operations Research. {\bf 8},
439-453.
\bibitem{C5}Harrison, J.M., 1985.  Brownian motion and stochastic flow
systems( New York: Wiley).
\bibitem{ime01} Lin He, Ping Hou and Zongxia Liang, 2008.
 Optimal Financing and Dividend Control of the Insurance Company with
Proportional Reinsurance Policy under solvency constraints.
Insurance: Mathematics and Economics, Vol.43, 474-479.
\bibitem{ime02} Lin He, Zongxia Liang, 2008.
 Optimal Dividend Control of the Insurance Company with
Proportional Reinsurance Policy under solvency constraints.
Insurance: Mathematics and Economics, Vol.43, 474-479.
\bibitem{ime03} Lin He, Zongxia Liang, 2009.
 Optimal Financing and Dividend Control of the Insurance Company with
fixed and Proportional transaction costs. Insurance: Mathematics and
Economics 44 (2009) 88-94.
\bibitem{s3} H{\o}jgaard,
B., Taksar, M., 1999. Controlling Risk Exposure and Dividends Payout
Schemes: Insurance company Example. Mathematical Finance, Vol. 9,
No. 2, 153-182.
\bibitem{s4}
H{\o}jgaard, B., Taksar, M., 2001. Optimal Risk Control for a Large
Corporation in the Presence of Returns on Investments. Finance
Stochast. Vol. 5, 527-547.
\bibitem{C6} Iglehart D.L.,1969.  Diffusion approximations in collective
risk theory. J.App. Probab. {\bf 6}. 285-292.
\bibitem{s107} Ikeda,  N., Watanabe, S., 1981. Stochastic differentail
Equ ations and  Diffusion Processes. North-Holland, ISBN
0444-86172-6.
\bibitem{IW}Ikeda, I. and Watanabe: A comparison theorem for
solutions of stochastic differential equations and its applications.
Osaka J. Math. Vol.14, N.3, 619-633,1977.
\bibitem{HL}Zongxia Liang, Jianping Huang:
Optimal dividend and investing control of a insurance company with
higher solvency constraints.  arXiv:1005.1360.
\bibitem{prep1}Zongxia Liang, Jicheng Yao: Nonlinear optimal stochastic control
of large insurance company with insolvency probability constraints.
arXiv:1005.1361.
\bibitem{prep2}Zongxia Liang, Jicheng Yao: Optimal dividend policy of a
large insurance  company with positive transaction cost under higher
solvency and security. arXiv:1005.1356.
\bibitem{s10100}   Lions, P.-L.; Sznitman, A.S.: Stochastic differential
                 equations with reflecting boundary conditions. Comm.Pure Appl.
                 Math.{\bf 37}(1984)511-537.
\bibitem{MM} Miller,M.H. and F. Modigliani:  Dividend Policy,
Growth, and the Valuation of Shares, J. Business 34, 411-433, 1961.
\bibitem{new23}Paulsen, J., 2003. Optimal Dividend Payouts for Solvency
Constraints. Finance Stochastics, Vol. 7, 457-473.
\bibitem{C7} Schmidli H., 1994.  Diffusion approximations for a risk
process with the possibility of borrowing and interest. Commun.
Stat. Stochast. Models. {\bf 10}, 365-388.
\bibitem{s16} Taksar, M., 2000.
Optimal Risk/dividend Distribution Control Models: Applications to
Insurance. Math. Methods Oper., Res. {\bf 1}, 1-42.
\bibitem{ime04}
Taksar, M., Xun Yu Zhou, 2008.
 Optimal Risk and Dividend Control for a Company with
a Debt Liability. Insurance: Mathematics and Economics, Vol.22,
105-122.
\end{thebibliography}
\end{document}